\documentclass{amsart}

\usepackage{amsrefs,slashed,booktabs}

\usepackage[T1]{fontenc}

\newtheorem{theorem}{Theorem}[section]
\newtheorem{lemma}[theorem]{Lemma}
\newtheorem{proposition}[theorem]{Proposition}
\newtheorem{corollary}[theorem]{Corollary}

\theoremstyle{definition}
\newtheorem{definition}[theorem]{Definition}
\newtheorem*{defn}{Definition}
\newtheorem{example}[theorem]{Example}

\theoremstyle{remark}
\newtheorem{remark}[theorem]{Remark}

\numberwithin{equation}{section}

\newcommand{\bA}{\mathbb{A}}
\newcommand{\bC}{\mathbb{C}}
\newcommand{\bN}{\mathbb{N}}
\newcommand{\bH}{\mathbb{H}}
\newcommand{\bR}{\mathbb{R}}
\newcommand{\bZ}{\mathbb{Z}}
\newcommand{\bK}{\mathbb{K}}
\newcommand{\cA}{\mathcal{A}}

\newcommand{\cE}{\mathcal{E}}
\newcommand{\cF}{\mathcal{F}}
\newcommand{\cH}{\mathcal{H}}
\newcommand{\cHi}{H^{\infty}}

\newcommand{\cP}{\mathcal{P}}
\newcommand{\cS}{\mathcal{S}}
\newcommand{\cV}{\mathcal{V}}

\newcommand{\ZZ}{\ensuremath{\mathbb{Z}_2}}
\newcommand{\eps}{\varepsilon}
\newcommand{\epsp}{{\varepsilon^\prime}}
\newcommand{\epspp}{{\varepsilon^{\prime\prime}}}

\newcommand{\abs}[1]{\left|{#1}\right|}

\newcommand{\ip}[1]{\left\langle {#1}\right\rangle}
\newcommand{\hp}[1]{\left({#1}\right)}
\newcommand{\ncint}{{\int \!\!\!\!\!\! -}}
\newcommand{\set}[1]{\left\{{#1}\right\}}


\newcommand{\Cstar}{{\ensuremath{C^*}}}

\newcommand{\Star}{\ensuremath{\ast}}
\newcommand{\term}[1]{{\it{#1}\/}}

\DeclareMathOperator{\dom}{Dom}

\DeclareMathOperator{\bCl}{\mathbb{C}l}
\DeclareMathOperator{\bClp}{\mathbb{C}l^{(+)}}
\DeclareMathOperator{\End}{End}

\DeclareMathOperator{\Id}{Id}

\begin{document}

\title[Real almost-commutative spectral triples]{Real structures on almost-commutative spectral triples}

\author{Branimir \'Ca\'ci\'c}
\address{Department of Mathematics\\ California Institute of Technology\\ MC 253-37\\1200 E California Blvd\\ Pasadena, CA 91125}
\email{branimir@caltech.edu}

\subjclass[2010]{Primary 58B34, Secondary 46L87, 81T75}

\keywords{Noncommutative geometry, Spectral triple, Almost-commutative, Dirac-type operator}


\begin{abstract}
We refine the reconstruction theorem for al\-most-com\-mu\-ta\-tive spectral triples to a result for real al\-most-com\-mu\-ta\-tive spectral triples, clarifying in the process both concrete and abstract definitions of real commutative and almost-commutative spectral triples. In particular, we find that a real almost-commutative spectral triple algebraically encodes the commutative \Star-algebra of the base manifold in a canonical way, and that a compact oriented Riemannian manifold admits real (almost-)commutative spectral triples of arbitrary $KO$-dimension. Moreover, we define a notion of smooth family of real finite spectral triples and of the twisting of a concrete real commutative spectral triple by such a family, with interesting $KK$-theoretic and gauge-theoretic implications.
\end{abstract}

\maketitle

The famed Gel'fand--Na{\u\i}mark duality allows for a contravariant functorial identification of the theory of \Cstar-algebras as a theory of noncommutative topological spaces. Analogously, Connes's reconstruction theorem for commutative spectral triples~\cite{Con08} suggests a partial identification, at least at the level of objects, of the theory of spectral triples as a theory of noncommutative manifolds. However, since there is no canonical choice of commutative spectral triple for a compact oriented manifold, it has become traditional in the noncommutative-geometric literature to restrict attention to the case of compact spin manifolds, which admit a canonical Dirac-type operator, \emph{the} Dirac operator, and hence a canonical commutative spectral triple. Indeed, the influence of this example has been profound on the development of the theory of spectral triples, both implicitly through the traditional forms of the definitions of commutative and almost-commutative spectral triples, and explicitly through the focus on \emph{real} spectral triples; the ubiquity of real spectral triples in the literature, together with the explicit use of real spectral triples in applications to theoretical high energy physics, would seem to justify this restriction.

There is, however, another way to approach this issue: every compact oriented manifold admits a Riemannian metric, and a compact oriented Riemannian manifold $X$ admits a canonical spectral triple, namely, the \emph{Hodge--de Rham} spectral triple $(C^\infty(X),L^2(X,\wedge T^\ast_\bC X),d+d^\ast)$. This, then, suggests that the theory of spectral triples might be fruitfully viewed as a theory of noncommutative Riemannian manifolds. Indeed, Lord--Rennie--V\'arilly have developed a full noncommutative generalisation of the Hodge--de Rham spectral triple \emph{qua} candidate notion of noncommutative Riemannian manifold~\cite{LRV}. On the other hand, in the (almost-)com\-mutative context, one might observe that the Hodge--de Rham spectral triple of a compact oriented Riemannian manifold $X$, or more generally, the spectral triple $(C^\infty(X),L^2(X,\cE),D)$ of a Dirac-type operator $D$ acting on a Hermitian vector bundle $\cE \to X$, should constitute a commutative spectral triple in the context of Connes's reconstruction theorem, even if it might only satisfy a weakened version of the orientability condition. This observation formed the basis of our earlier work on almost-commutative spectral triples~\cite{Ca12}, where we obtained a reconstruction theorem for a more general, manifestly global-analytic notion of almost-commutative spectral triple based on general Dirac-type operators; in particular, we were able to refine Connes's reconstruction theorem into a precise noncommutative-geometric characterisation of Dirac-type operators on compact oriented Riemannian manifolds.

In this paper, we refine our earlier definitions and results concerning commutative and almost-commutative spectral triples to accommodate real structures, yielding a reconstruction theorem for real almost-commutative spectral triples. The brunt of the work is in finding the correct concrete (\emph{viz}, global-analytic) and abstract (\emph{viz}, noncommutative-geometric) definitions of real commutative spectral triples and real almost-commutative spectral triples; in particular, we find that the correct abstract definition of real almost-commutative spectral triple reads as follows.

\begin{defn}
An \emph{real almost-commutative spectral triple} of $KO$-dimension $n \bmod 8$ and metric dimension $p$ is a real spectral triple $(A,H,D,J)$ of $KO$-dimension $n \bmod 8$ such that $(A,H,D)$ is a $p$-dimensional (abstract) almost-commutative spectral triple with base $\tilde{A}_J := \set{a \in A \mid Ja^\ast J^\ast = a}$.
\end{defn}

\noindent This definition encapsulates the important general fact, observed to date in specific examples, that a concrete real almost-commutative spectral triple algebraically encodes in a canonical way the commutative \Star-algebra of the base manifold; by contrast, without such additional structure, an abstract almost-commutative spectral triple must be specified with an explicit choice of base. Moreover, we show that a compact oriented Riemannian manifold admits real (almost-)commutative spectral triples of arbitrary $KO$-dimension, providing final confirmation of the essential independence of metric and $KO$-dimensions for real spectral triples.

In addition to these general considerations, we develop a more conservative generalisation of the traditional construction of real almost-commutative spectral triples, defining a notion of \emph{real family}, \emph{viz}, a smooth family of real finite spectral triples with compatible connection, and then defining the product or \emph{twisting} of a real concrete commutative spectral triple by such a real family according to the general framework of D\k{a}browski--Dossena for the product of real spectral triples~\cite{DaDo}. This construction turns out to have a natural interpretation from the standpoint of Bram Mesland's $KK$-theoretic category of spectral triples~\cite{Me09b}, and to have a particularly nice structure with regard to inner fluctuations of the metric.

The author would like to thank his advisor, Matilde Marcolli, and Susama Agarwala, Alan Lai, and Kevin Teh, for helpful comments and conversations, as well as the Erwin Schr\"odinger Institute for its hospitality and its financial and administrative support in the context of the thematic programme ``K-Theory and Quantum Fields.'' The author was also partially supported by NSF grant DMS-1007207.

\section{Preliminaries}

We begin by reviewing the theory of commutative and almost-commutative spectral tripes, including the construction of products of spectral triples.

\subsection{Commutative spectral triples}

Recall that a \emph{Dirac-type operator} on a compact oriented Riemannian manifold $X$ is a first order differential operator $D$ on some Hermitian vector bundle $\cE \to X$, such that $D^2$ is a Laplace-type operator, or equivalently,
\begin{equation}\label{dirac}
	[D,f]^2 = - g(df,df), \quad f \in C^\infty(X);
\end{equation}
if $\cE$ is \ZZ-graded, then we require $D$ to be an odd operator. In particular, then, a Dirac-type operator $D$ on a Hermitian vector bundle $\cE \to X$ induces a Clifford action by
\begin{equation}\label{dirac2}
	c(df) := [D,f], \quad f \in C^\infty(X).
\end{equation}
If $\cE$ is already a Clifford module with Clifford action $c : T^\ast X \to \End(\cE)$, then we require Dirac-type operators on $\cE$ to satisfy Eq.~\ref{dirac2} for the pre-existing Clifford action $c$. Finally, we shall require Dirac-type operators $D$ on $\cE \to X$ to be symmetric, so that they are essentially self-adjoint on $L^2(X,\cE)$ with smooth core $C^\infty(X,\cE)$.

A Dirac-type operator $D$ on $\cE \to X$ gives rise, in the obvious way, to a spectral triple $(C^\infty(X),L^2(X,\cE),D)$, whose basic features are encapsulated in the following definition, slightly modified from Connes's original definition \cites{Con96,Con08}:

\begin{definition}
Let $(A,H,D)$ be a regular spectral triple of metric dimension $p \in \bN$, such that $\cA$ is commutative. We call $(A,H,D)$ a \term{($p$-dimensional) commutative spectral triple} if the following conditions hold:
\begin{enumerate}
	\item {\bf Order one}: For any $a$, $b \in A$, $[[D,a],b] = 0$.
	\item {\bf Finiteness}: One has that $\cHi := \cap_{m} \dom D^{m}$ is a finitely generated projective $A$-module.
	\item {\bf Strong regularity}: One has that $\End_A(\cHi) \subset \cap_k \dom \delta^k$ for the derivation $\delta : T \mapsto [\abs{D},T]$ on $B(H)$.
	\item {\bf Orientability}: There exists an antisymmetric Hoch\-schild $p$-cycle $c \in Z_{p}(A,A)$ such that $\chi = \pi_D(c)$ is a self-adjoint unitary on $\cH$ satisfying $a\chi = \chi a$ and $[D,a]\chi = (-1)^{p+1} \chi[D,a]$ for all $a \in A$.
	\item {\bf Absolute continuity}: The $A$-module $\cHi$ admits a Hermitian structure $\hp{\cdot,\cdot}$ defined by the equality $\ip{\xi,a\eta} = \ncint a\hp{\xi,\eta}\abs{D}^{-p}$ for $a \in A$, $\xi$, $\eta \in \cHi$.
\end{enumerate}
Moreover, we call $(A,H,D)$ \emph{strongly orientable} if $\chi D + D \chi = 0$ when $p$ is even, and $\chi = 1$ when $p$ is odd, and we call $(A,H,D)$ \emph{Dirac-type} if $[D,a]^2 \in A$ for all $a \in A$.
\end{definition}

Thus, for $X$ a compact oriented Riemannian $p$-manifold, a Dirac-type operator $D$ on a Hermitian vector bundle $\cE \to X$ immediately gives rise to a \emph{concrete} $p$-dimensional Dirac-type commutative spectral triple $(C^\infty(X),L^2(X,\cE),D)$; in particular, then, $X$ admits a canonical concrete even $p$-dimensional Dirac-type commutative spectral triple, namely the Hodge--de Rham spectral triple \[(C^\infty(X),L^2(X,\wedge T^\ast_\bC X),d+d^\ast,(-1)^{\abs{\cdot}}),\] where $(-1)^{\abs{\cdot}}$ denotes the \ZZ-grading on $\wedge T_\bC^\ast X$ by parity of the degree.

\begin{remark}
In \cite{Ca12}, we called condition (4) ``weak orientability,'' reserving ``orientability'' for what we call here ``strong orientability.''
\end{remark}

\begin{remark}
What we call ``strong orientability'' is the orientability condition given in the literature, which models commutative spectral triples specifically on the Dirac operator $\slashed{D}$ of a compact spin manifold; the orientability and Dirac-type conditions above were first proposed in \cite{Ca12}, in order to accommodate correctly general Dirac-type operators on compact oriented Riemannian manifolds. In particular, the Dirac-type condition is a straightforward generalisation of Equation \ref{dirac}.
\end{remark}

Already in 1996, Connes conjectured~\cite{Con96} that one could recover a commutative manifold from a commutative spectral triple, just as one can recover a topological space from a commutative \Cstar-algebra \emph{via} Gel'fand--Na{\u\i}mark. Connes finally proved his conjecture, now called the \emph{reconstruction theorem} for commutative spectral triples, in 2008 \cite{Con08}, following a substantial attempt by Rennie--V\'arilly in 2006 \cite{RV06}:

\begin{theorem}[Connes \cite{Con08}*{Thm.\ 1.1}]\label{recon}
Let $(A,H,D)$ be a strongly orientable $p$-dimensional commutative spectral triple. Then there exists a compact oriented manifold $X$ such that $A \cong C^\infty(X)$.
\end{theorem}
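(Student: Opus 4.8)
The plan is to follow the strategy of Connes's original argument, since this \emph{is} his reconstruction theorem; one reconstructs in turn the underlying topological space, its smooth and oriented structure, and finally its algebra of smooth functions. \textbf{Stage one (the space).} As $\cA$ is commutative, the operator-norm closure $\bar{A} \subset B(H)$ is a commutative unital \Cstar-algebra, so $\bar{A} \cong C(X)$ for a compact Hausdorff space $X$, namely the character space of $A$; by Gel'fand--Na{\u\i}mark every $a \in A$ thereby becomes an a priori merely continuous function on $X$. The strong regularity and finiteness conditions then force $A$ to be a Fr\'echet pre-\Cstar-algebra, stable under holomorphic functional calculus in $\bar{A}$ and with the same $K$-theory as $C(X)$, and $\cHi = \cap_m \dom D^m$ to be a finitely generated projective module over it.

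\textbf{Stage two (the manifold structure), which is the heart of the matter.} Write the antisymmetric Hochschild $p$-cycle from the orientability axiom as $c = \sum_j a_j^0 \otimes a_j^1 \otimes \cdots \otimes a_j^p \in Z_p(A,A)$, so that $\chi = \pi_D(c) = \sum_j a_j^0 [D,a_j^1]\cdots[D,a_j^p]$ is a self-adjoint unitary on $\cH$. The order-one condition makes each $[D,a]$ behave like Clifford multiplication by a differential $da$, the quantity $[D,a]^2$ carries the symbol of a (minus) quadratic form in $da$ as in Eq.~\ref{dirac}, and strong orientability ($\chi D + D\chi = 0$ for $p$ even, $\chi = 1$ for $p$ odd) identifies $\chi$ with the chirality/volume element. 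Using that $\chi$ is unitary one shows that near each point $x_0 \in X$ there is an index $j$ and an ordered subfamily of the $a_j^i$ — say, after relabelling, $a^1,\dots,a^p$ — whose ``differentials'' $[D,a^i]$ are linearly independent at $x_0$, so that $\phi = (a^1,\dots,a^p)$ restricts to a homeomorphism of a neighbourhood of $x_0$ onto an open subset of $\bR^p$ with invertible ``Jacobian'' there; strong regularity ($\End_A(\cHi) \subset \cap_k \dom \delta^k$) controls the pseudodifferential calculus generated by $D$ and $A$ and shows that the resulting finite atlas has smooth transition functions. Hence $X$ is a smooth compact manifold of dimension $p$, oriented by the Hochschild-cycle data.

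\textbf{Stage three (identifying $A$).} The absolute-continuity axiom identifies the functional $\ncint \cdot\, \abs{D}^{-p}$ with integration against a smooth positive density, and hence $\dom D^m$ with genuine local Sobolev spaces, so that every $a \in A$ is smooth in each chart, i.e.\ $A \subseteq C^\infty(X)$; conversely, a partition-of-unity argument, using that $A$ contains the coordinate functions and separates the points of $X$, gives $C^\infty(X) \subseteq A$, whence $A \cong C^\infty(X)$ (indeed equality inside $C(X)$). The main obstacle is Stage two: extracting genuine local coordinates and a smooth atlas from purely spectral and algebraic data — both the geometric point, of making precise in what sense $[D,a]$ is ``Clifford multiplication by $da$'' and why self-adjoint-unitarity of $\chi$ forces non-degeneracy of the candidate charts, and the analytic bootstrap, of showing that the regularity axioms really do force smoothness of all of $A$, and of the transition maps, in the coordinates just produced. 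This is the technical core of \cite{Con08}; Stages one and three amount to Gel'fand--Na{\u\i}mark, Serre--Swan, and elliptic regularity.
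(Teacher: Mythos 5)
The paper does not prove Theorem~\ref{recon} at all: it is quoted, with attribution, from Connes \cite{Con08}*{Thm.~1.1} and then used as a black box (through Thm.~\ref{rierecon} and Thm.~\ref{acrecon}). So the only question is whether your outline would stand on its own as a proof, and it would not. It is a reasonable roadmap of Connes's argument --- Gel'fand--Na{\u\i}mark for the space, the Hochschild cycle for charts, functional calculus and regularity for $A \cong C^\infty(X)$ --- but the entire content of the theorem sits in your Stage two, which you describe and then explicitly defer to \cite{Con08}. A description of the hard step, however accurate, is not a proof of it.

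Concretely, the sentence ``using that $\chi$ is unitary one shows that near each point $x_0$ there is a subfamily whose differentials $[D,a^i]$ are linearly independent at $x_0$, so that $\phi=(a^1,\dots,a^p)$ restricts to a homeomorphism \dots with invertible Jacobian'' is not a step you are entitled to take at that stage: before the manifold has been constructed, the points of $X$ are merely characters of $\bar{A}$, and ``$da$,'' ``linear independence at $x_0$,'' and ``Jacobian'' have no meaning. Giving these phrases meaning out of purely spectral data is precisely the technical core of \cite{Con08}: one must first show that elements of $A$ behave like smooth functions in a spectral sense (via regularity and strong regularity), use the absolute-continuity functional $\ncint \cdot\,\abs{D}^{-p}$ to produce a measure and control the candidate charts initially only up to sets of positive measure, invoke multiplicity estimates on $A^{\prime\prime}$ to upgrade this to an actual open cover, and use stability of $A$ under smooth functional calculus to obtain $C^\infty(X) \subseteq A$ --- your partition-of-unity remark in Stage three tacitly relies on this last point as well. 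None of that is supplied, so the proposal is a correct statement of the strategy rather than a proof; within the present paper, the honest ``proof'' of this statement is simply the citation to \cite{Con08}.
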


Once one has reconstructed the manifold itself, one can proceed to reconstruct the Hermitian vector bundle too, and realise the operator $D$ as an elliptic first-order differential operator, if not necessarily a Dirac-type operator:

\begin{theorem}[Connes \cite{Con96}, Gracia-Bond{\'\i}a--V\'arilly--Figueroa \cite{GBVF}*{Thm.\ 11.2}]\label{babyrecon}
Let $(A,H,D)$ be a strongly orientable $p$-dimensional commutative spectral triple with $A \cong C^\infty(X)$ for some compact orientable manifold $X$. Then there exists a Hermitian vector bundle $\cE \to X$ such that $(A,H,D) \cong (C^\infty(X),L^2(X,\cE),D)$, where $D$ is identified with an essentially self-adjoint elliptic first-order differential operator on $\cE$. 
\end{theorem}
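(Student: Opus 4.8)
The plan is to reconstruct $\cE$ from the smooth domain $\cHi$ by Serre--Swan, transport the Hilbert-space structure using the absolute-continuity pairing, and then show that the order-one and regularity conditions, together with the compactness of the resolvent, force $D$ to be an elliptic first-order differential operator. I expect the delicate point to be not the extraction of the bundle but the identification of $D$ as a \emph{genuine} (local, smooth-coefficient) differential operator.

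First, since the Finiteness condition makes $\cHi := \cap_m \dom D^m$ a finitely generated projective module over $A \cong C^\infty(X)$, the smooth Serre--Swan theorem yields a complex vector bundle $\cE \to X$ with a $C^\infty(X)$-module isomorphism $\cHi \cong C^\infty(X,\cE)$. The Absolute continuity condition equips $\cHi$ with a Hermitian $A$-valued form, hence $\cE$ with a Hermitian metric $\hp{\cdot,\cdot}$, and the identity $\ip{\xi, a\eta} = \ncint a\hp{\xi,\eta}\abs{D}^{-p}$ shows that the $H$-inner product, restricted to $\cHi$, agrees up to the universal constant relating $\ncint(\cdot)\abs{D}^{-p}$ to the Riemannian volume integral with the $L^2$-pairing on $C^\infty(X,\cE)$. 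Since $\cHi$ is the space of smooth vectors of the self-adjoint operator $D$, it is dense in $H$, so one obtains a unitary $H \cong L^2(X,\cE)$ carrying $\cHi$ onto $C^\infty(X,\cE)$.

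Second, $D$ commutes with each $D^m$, so it preserves every $\dom D^m$ and hence restricts to an operator $D : C^\infty(X,\cE) \to C^\infty(X,\cE)$. For $f \in A$, the Order one condition says $[D,f]$ is $A$-linear on $\cHi$, so it is a bounded operator commuting with all of $C^\infty(X) \subset B(L^2(X,\cE))$ and therefore acts as multiplication by an $L^\infty$ bundle endomorphism; Strong regularity, together with ordinary regularity, places $[D,f] \in \cap_k \dom \delta^k$, forcing that endomorphism to be smooth, so $f \mapsto [D,f]$ is a derivation $C^\infty(X) \to \Gamma^\infty(\End\cE)$. By Grothendieck's algebraic characterisation of differential operators, $D$ is then a first-order differential operator on $\cE$ with principal symbol determined by $\sigma_D(df) = [D,f]$; and since $\cHi$ is a core for the self-adjoint operator $D$, the restriction $D|_{C^\infty(X,\cE)}$ is essentially self-adjoint. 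This is the step I regard as the main obstacle: one must use the full strength of (strong) regularity and the description of $\cHi$ as the \emph{common smooth domain}, rather than merely as an abstract projective module, both to exclude zeroth-order pseudodifferential behaviour in the commutators $[D,f]$ and to promote the algebraic conclusion to an honest geometric differential operator.

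Finally, for ellipticity: $D$ is a symmetric first-order differential operator on the compact manifold $X$ whose resolvent $(D \pm i)^{-1}$ is compact. If $D$ failed to be elliptic, its symbol would have nontrivial kernel at some $(x_0,\xi_0)$ with $\xi_0 \neq 0$, and a standard quasimode/concentration argument — a highly oscillatory section $e^{it\phi}a$ with $d\phi(x_0) = \xi_0$ and $a$ supported in a shrinking ball around $x_0$ with $a(x_0) \in \ker\sigma_D(x_0,\xi_0)$, normalised and with $t \to \infty$ — produces a sequence $u_t \rightharpoonup 0$ with $\norm{u_t} = 1$ and $\norm{(D\pm i)u_t}$ bounded, so that $(D\pm i)^{-1}$ carries a bounded set onto a set with no convergent subsequence, contradicting compactness. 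Hence $D$ is elliptic, and $(A,H,D) \cong (C^\infty(X),L^2(X,\cE),D)$ as claimed.
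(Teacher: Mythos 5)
Your first two steps are essentially the route taken in the sources this theorem is quoted from (the paper itself gives no proof, deferring to Connes \cite{Con96} and \cite{GBVF}*{Thm.\ 11.2}): finiteness plus smooth Serre--Swan yields $\cHi \cong C^\infty(X,\cE)$, absolute continuity yields the Hermitian structure and, via the Riesz measure of $f \mapsto \ncint f\abs{D}^{-p}$ (note there is no Riemannian volume available at this stage), the unitary $H \cong L^2(X,\cE)$; the order-one condition together with the Grothendieck/Peetre characterisation then identifies $D|_{\cHi}$ with a first-order differential operator, essentially self-adjoint because $\cHi$ is a core. One correction: smoothness of $[D,f]$ is not a consequence of strong regularity (membership in $\cap_k \dom \delta^k$ says nothing about smooth coefficients before $D$ is known to be geometric); it is automatic because $[D,f]$ preserves $\cHi$ (by regularity) and is $A$-linear there, hence lies in $\End_A(\cHi) \cong C^\infty(X,\End\cE)$.

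The genuine gap is the ellipticity step. Your concentration argument does not produce what it needs: writing $u_t = e^{it\phi}a_t$ with $a_t$ supported in a ball of radius $r_t$ about $x_0$ and $a_t(x_0) \in \ker \sigma_D(x_0,\xi_0)$, one has $(D \pm i)u_t = e^{it\phi}\left(it\,\sigma_D(d\phi)a_t + Da_t \pm i a_t\right)$, and since the kernel of the symbol need not persist away from the single point $(x_0,\xi_0)$, the best available bound is $\norm{\sigma_D(d\phi)a_t} = O(r_t)\norm{a_t}$, while $\norm{Da_t} = O(r_t^{-1})\norm{a_t}$ from the cutoff. Hence $\norm{(D\pm i)u_t}$ cannot be kept bounded as $t \to \infty$ (the optimal choice $r_t \sim t^{-1/2}$ still gives growth of order $t^{1/2}$), and once it is unbounded there is no contradiction with compactness of the resolvent: after normalising, the preimages tend to zero in norm. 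Your ansatz works only when the degenerate direction persists on an open set (then $a$ can be held fixed and oscillation alone gives weak convergence to zero), which is exactly the case one does not need to worry about. More seriously, ``a symmetric first-order differential operator on a closed manifold with compact resolvent is elliptic'' is not a standard fact and is at best delicate: the square of such an operator is a degenerate-elliptic (Grushin/Heisenberg-type) operator whose low-lying spectrum is governed by subprincipal cancellations, and where low-energy states exist they are coherent-state constructions adapted to the model operator, not bump-times-plane-wave states. The proofs you are implicitly competing with do not deduce ellipticity from the compact resolvent at all; it is obtained from the remaining axioms---in particular the orientability cycle, whose image $\chi = \pi_D(c)$ is a unitary and hence pointwise constrains the principal symbols $[D,a]$ to be jointly invertible---together with absolute continuity and the metric dimension. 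Tellingly, your proposal never invokes (strong) orientability anywhere, which signals that a needed hypothesis has been silently replaced by an unproved analytic claim.
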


In fact, this last result was stated and proved in the context of reconstructing spin manifolds with spin Dirac operators. The following result of Connes's gives an concise characterisation of spin$^\bC$ manifolds and spin$^\bC$ Dirac operators, possibly with torsion; we shall later recall Plymen's characterisation of spin manifolds amongst spin$^\bC$ manifolds and the resulting theory of real spectral triples.

\begin{corollary}[Connes \cite{Con08}*{Thm.\ 1.2}]\label{spincrecon}
If, moreover, $A^{\prime\prime}$ acts on $H$ with multiplicity $2^{\lfloor p/2 \rfloor}$ (but without needing to assume strong regularity), then $X$ is spin$^\bC$, $\cE \to X$ is a spinor bundle, and $D$ is a Dirac-type operator.
\end{corollary}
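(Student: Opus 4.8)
The plan is to first reconstruct the underlying manifold and Hermitian vector bundle, then to identify the principal symbol of $D$ with a bundle morphism $T^\ast X \to \End(\cE)$, and finally to use the multiplicity hypothesis to force that morphism, pointwise, to be a spinorial Clifford action. Concretely, I would begin by applying Connes's reconstruction (Theorems~\ref{recon}--\ref{babyrecon}), in the strong-regularity-free form of \cite{Con08}*{Thm.\ 1.2} in which the multiplicity hypothesis on $A''$ takes over the role of condition~(3), to produce a compact oriented Riemannian $p$-manifold $X$, a Hermitian vector bundle $\cE \to X$, and an isomorphism $(A,H,D) \cong (C^\infty(X),L^2(X,\cE),D)$ identifying $A$ with $C^\infty(X)$, $\cHi$ with $C^\infty(X,\cE)$, and $D$ with an essentially self-adjoint elliptic first-order differential operator, the metric $g$ of $X$ being recovered from the spectral triple (e.g.\ via Connes's distance formula).

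Next I would extract the symbol. The order-one condition gives $[D,f] = c(df)$ for a bundle morphism $c\colon T^\ast X \to \End(\cE)$ --- essentially the principal symbol of $D$ --- with $c(\xi)$ invertible for $\xi \neq 0$ by ellipticity and $c(\xi)^\ast = -c(\xi)$ for real $\xi$ by self-adjointness of $D$. The orientability condition, transported across the isomorphism, then yields a nowhere-vanishing self-adjoint unitary section $\chi$ of $\End(\cE)$, built from products of the $c(df)$, with $a\chi = \chi a$ and $c(df)\chi = (-1)^{p+1}\chi c(df)$ for all $f$; fibrewise $\chi_x$ is, up to normalisation, the totally antisymmetrised $p$-fold product of the symbol $c_x$.

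The heart of the argument is a pointwise claim. Fixing $x \in X$ and writing $V = T^\ast_x X \cong \bR^p$ with its quadratic form $g_x$ and $N = \dim_\bC \cE_x$, the map $c_x$ embeds $V$ into the skew-Hermitian elements of $\End(\cE_x) \cong M_N(\bC)$ whose nonzero values are all invertible, carrying the additional orientation datum $\chi_x$ above; the multiplicity hypothesis says precisely that $N = 2^{\lfloor p/2\rfloor}$, the minimal rank compatible with such data. I claim this minimality rigidifies the situation: the unital subalgebra of $\End(\cE_x)$ generated by $c_x(V)$ is forced to be all of $\End(\cE_x)$, realised on $\cE_x$ in the spinor representation, and, matching $\chi_x$ against the chirality element, $c_x$ is forced to satisfy the Clifford relations $c_x(\xi)c_x(\eta) + c_x(\eta)c_x(\xi) = -2g_x(\xi,\eta)$, so that $(\cE_x,c_x)$ is an irreducible $\bCl(V,g_x)$-module for $p$ even, resp.\ one of the two inequivalent irreducibles for $p$ odd --- a spinor space; the dimension count $N = 2^{\lfloor p/2\rfloor}$ is exactly what excludes reducible or otherwise non-spinorial possibilities, such as the Hodge--de Rham bundle $\wedge T^\ast_\bC X$. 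Granting this pointwise statement, globally $[D,f]^2 = c(df)^2 = -g(df,df) \in C^\infty(X) = A$, so $D$ is Dirac-type; $c$ exhibits $\cE$ as a bundle of irreducible $\bCl(T^\ast X)$-modules, i.e.\ a spinor bundle, whence $X$ is spin$^\bC$, since a compact oriented Riemannian manifold admits such a bundle if and only if it is spin$^\bC$; and a first-order differential operator with Clifford-multiplication symbol on a spinor bundle is by definition a spin$^\bC$ Dirac-type operator.

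The main obstacle is exactly this pointwise linear-algebra and representation-theory rigidity: that minimal rank $N = 2^{\lfloor p/2\rfloor}$, together with invertibility of the symbol off the zero section, skew-Hermiticity, and the existence of the orientation element, leaves no room for a symbol that is not a genuine spinorial Clifford action. This is the substance of Connes's \cite{Con08}*{Thm.\ 1.2}, which I would quote rather than reprove; everything else is symbol bookkeeping together with the standard equivalence between existence of a spinor bundle and the spin$^\bC$ condition.
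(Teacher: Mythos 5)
Your proposal is correct and follows essentially the same route as the paper, which offers no independent proof: it simply cites Connes's Thm.\ 1.2 and remarks that the corollary follows from Theorems~\ref{recon} and \ref{babyrecon} together with part (1) of Plymen's characterisation, the multiplicity hypothesis matching the rank $2^{\lfloor p/2\rfloor}$ of $\bClp(X)$ so that $\cE$ is forced to be an irreducible Clifford module, hence a spinor bundle. Since you likewise defer the pointwise rigidity (which, as a free-standing linear-algebra claim, would still need Connes's argument rather than mere minimality of rank) to \cite{Con08}*{Thm.\ 1.2}, there is nothing to add.
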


Much more generally, by dropping the strong orientability hypothesis from Theorems~\ref{recon} and \ref{babyrecon} \cite{Ca12}*{Proof of Cor.\ 2.19} and exploiting the Dirac-type condition, we obtained a characterisation of compact oriented Riemannian manifolds and Dirac-type operators:

\begin{theorem}[\cite{Ca12}*{Cor.\ 2.19}]\label{rierecon}
Let $(A,H,D)$ be a $p$-dimensional commutative spectral triple. If $(A,H,D)$ is Dirac-type, then there exist a compact oriented Riemannian $p$-manifold $X$ and a Hermitian vector bundle $\cE \to X$ such that $(A,H,D) \cong (C^\infty(X),L^2(X,\cE),D)$, where $D$ is identified with an essentially self-adjoint Dirac-type operator on $\cE$.
\end{theorem}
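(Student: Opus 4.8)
The plan is to run Connes's reconstruction machinery (Theorems~\ref{recon} and~\ref{babyrecon}) on $(A,H,D)$ even though we only have weak orientability rather than strong orientability, and then to extract the Riemannian metric from the Dirac-type condition. The first observation is that strong orientability plays no role in reconstructing the algebra: the proof of Theorem~\ref{recon} uses the antisymmetric Hochschild $p$-cycle $c$ of condition~(4) only to manufacture local coordinate charts and to verify that the elements of $A$ are genuinely smooth functions on the resulting space, and what it actually exploits there is the self-adjoint unitary $\chi = \pi_D(c)$ commuting with $A$, together with regularity, finiteness, strong regularity and absolute continuity; the relations $\chi D + D\chi = 0$ (even case) or $\chi = 1$ (odd case) are invoked only afterward, to pin $\chi$ down as the expected grading operator. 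Where one wants a cleaner footing, one may instead pass to an auxiliary \emph{strongly} orientable commutative spectral triple built on the same algebra $A$ --- concretely, the ``intrinsic Hodge--de Rham'' triple obtained by completing $(A,H,D)$ to a Clifford-module triple using the abstract Clifford action $c(da) := [D,a]$ furnished by the Dirac-type hypothesis, which carries a tautological grading by Clifford degree and is therefore strongly orientable --- and check that order one, regularity, finiteness, strong regularity, orientability and absolute continuity all transfer. Either route yields a compact oriented manifold $X$ with $A \cong C^\infty(X)$, the orientation being the one carried by $c$.

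With $A = C^\infty(X)$ in hand, the second step is to reconstruct the bundle. Running (the proof of) Theorem~\ref{babyrecon} --- again, strong orientability enters there only to force a particular shape of operator, not to produce the bundle --- conditions~(2), (3) and~(5) give a Hermitian vector bundle $\cE \to X$ with $\cHi = C^\infty(X,\cE)$ and $H = L^2(X,\cE)$, and realize $D$ as a symmetric first-order differential operator on $\cE$, essentially self-adjoint with smooth core $C^\infty(X,\cE)$. That $D$ is elliptic follows because its principal symbol at an exact covector $df$ is (a scalar multiple of) $[D,f]$, whose invertibility off the zero section is exactly what the metric-dimension and finiteness axioms encode.

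The third and decisive step is to read off the Riemannian structure. For real $f \in C^\infty(X)$ the operator $[D,f]$ is skew-adjoint, since $D$ is symmetric, so $-[D,f]^2 \ge 0$; and the Dirac-type hypothesis $[D,f]^2 \in A$ says precisely that $-[D,f]^2$ is multiplication by a non-negative smooth function. Polarizing, $g(df_1,df_2) := -\tfrac12\big([D,f_1][D,f_2] + [D,f_2][D,f_1]\big)$ defines a smooth symmetric bilinear form on $T^\ast X$; ellipticity of $D$ --- i.e.\ invertibility of the symbol $df \mapsto [D,f]$ --- forces $g$ to be non-degenerate, hence a genuine Riemannian metric. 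By construction $[D,f]^2 = -g(df,df)$, which is exactly Eq.~\ref{dirac}, and $c(df) = [D,f]$ realizes the Clifford action of Eq.~\ref{dirac2}; if $H$ is \ZZ-graded then $D$ is odd by hypothesis. Thus $D$ is a Dirac-type operator on $\cE \to X$ in the sense of the preliminaries, and chasing the identifications through Steps~1 and~2 gives the asserted unitary isomorphism $(A,H,D) \cong (C^\infty(X),L^2(X,\cE),D)$.

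The main obstacle is Step~1: justifying that Connes's reconstruction of the algebra, and the bundle reconstruction in Step~2, really go through under weak orientability alone. Tracing his argument to confirm that strong orientability is only cosmetic is possible but delicate; the auxiliary-triple route is conceptually cleaner but shifts the burden to verifying that all five structural axioms survive the Clifford-module completion, which is where the genuine bookkeeping lies. A secondary subtlety is pinning down the ellipticity of $D$ rigorously from the finiteness and metric-dimension axioms, since this is precisely what upgrades $g \ge 0$ to a non-degenerate metric.
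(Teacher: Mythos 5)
Your proposal follows essentially the same route as the paper, which offers no independent proof here but cites \cite{Ca12}*{Cor.\ 2.19} and describes precisely this strategy: run the proofs of Theorems~\ref{recon} and~\ref{babyrecon} with the strong orientability hypothesis dropped (only the weak orientability datum $\chi$ being actually used), and then exploit the Dirac-type condition to recover the metric and Clifford action via $[D,f]^2 = -g(df,df)$ and $c(df) = [D,f]$. Your polarization argument, with non-degeneracy supplied by the ellipticity already furnished by the bundle-reconstruction step, is the same exploitation of the Dirac-type condition indicated there.
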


Thus, a spectral triple is Dirac-type commutative if and only if it is unitarily equivalent to a concrete Dirac-type commutative spectral triple.

\begin{remark}
Here, as in much of the noncommutative-geometric literature, the term ``vector bundle'' is applied to the more general case where the rank is only locally constant, so that the Serre--Swan theorem holds even in the case of a disconnected compact manifold (or, more generally, compact Hausdorff space). However, if one wishes to be scrupulous about using only vector bundles of constant rank, one can use Connes's observation~\cite{Con08}*{Proof of Thm.\ 11.15} that a commutative spectral triple $(A,H,D)$ arises from a Hermitian vector bundle of constant rank $r$ if and only if $A^{\prime\prime}$ acts on $H$ with multiplicity $r$.
\end{remark}

\subsection{Almost-commutative spectral triples}

Let us now turn to the theory of almost-commutative spectral triples. Since such spectral triples were originally defined as the product of a commutative spectral triple with a finite spectral triple, let us first recall the construction of products of spectral triples:

\begin{definition}\label{products}
For $i=1,2$, let $X_i = (A_i,H_i,D_i)$ be a spectral triple. Then the \emph{product} $X_1 \times X_2$ of $X_1$ and $X_2$ is the spectral triple defined as follows:
\begin{enumerate}
	\item If $X_1$ and $X_2$ are both even with \ZZ-gradings $\gamma_1$ and $\gamma_2$ respectively, then
	\[
		X_1 \times X_2 := (A_1 \otimes A_2, H_1 \otimes H_2, D_1 \otimes 1 + \gamma_1 \otimes D_2,\gamma_1 \otimes \gamma_2).
	\]
	\item If $X_1$ is even with \ZZ-grading $\gamma_1$ and $\gamma_2$ is odd, then
	\[
		X_1 \times X_2 := (A_1 \otimes A_2, H_1 \otimes H_2, D_1 \otimes 1 + \gamma_1 \otimes D_2).
	\]
	\item If $X_1$ is odd and $X_2$ is even with \ZZ-grading $\gamma_2$, then
	\[
		X_1 \times X_2 := (A_1 \otimes A_2, H_1 \otimes H_2, D_1 \otimes \gamma_2 + 1 \otimes D_2).
	\]
	\item If $X_1$ and $X_2$ are odd, then
	\[
		X_1 \times X_2 := (A_1 \otimes A_2, H_1 \otimes H_2 \otimes \bC^2, D_1 \otimes 1 \otimes \sigma_1 + 1 \otimes D_2 \otimes \sigma_2, 1 \otimes 1 \otimes \sigma_3),
	\]
	where the $\sigma_k$ are the Pauli sigma matrices:
	\[
		\sigma_0 = \begin{pmatrix} 1 & 0 \\ 0 & 1 \end{pmatrix}, \quad \sigma_1 = \begin{pmatrix} 0 & 1 \\ 1 & 0 \end{pmatrix}, \quad \sigma_2 = \begin{pmatrix} 0 & -i \\ i & 0 \end{pmatrix}, \quad \sigma_3 = \begin{pmatrix} 1 & 0 \\ 0 & -1 \end{pmatrix}.
	\]
\end{enumerate}
\end{definition}

\begin{remark}
In the case where both $X_1$ and $X_2$ are even, one could alternatively construct the Dirac operator of $X_1 \times X_2$ as $D_1 \otimes \gamma_2 + 1 \otimes D_2$; the resulting spectral triple is then unitarily equivalent to $X_1 \times X_2$ as constructed above.
\end{remark}

That the product of spectral triples is indeed a spectral triple does require verification---see \cites{Otgo09,DaDo} for details. In particular, Otgonbayar proves that the product of regular spectral triples is again regular \cite{Otgo09}*{Prop.\ 3.1.32}. 

The conventional definition of almost-commutative spectral triple then reads as follows:

\begin{definition}
A \emph{Cartesian almost-commutative spectral triple} is a spectral triple of the form $X \times F$, where $X$ is a compact spin manifold with fixed spin structure, identified, by abuse of notation, with its canonical commutative spectral triple $(C^\infty(X),L^2(X,\cS),\slashed{D})$, and $F$ is a finite spectral triple.
\end{definition}

\begin{remark}
If $A_F$ is a real \Cstar-algebra, then one should replace $C^\infty(X)$ with $C^\infty(X,\bR)$ in the spectral triple of $X$.
\end{remark}

We have already argued in \cite{Ca12} for a more general, indeed, manifestly global analytic notion of almost-commutative spectral triple, which does not require the base manifold to be spin, accommodates ``non-trivial fibrations'' already present in the literature, and is stable under inner fluctuation of the metric. To write down this definition succinctly, it will be convenient to give the following definitions:

\begin{definition}
Let $X$ be a compact manifold. We define a \emph{bundle of algebras} to be a locally trivial bundle of finite-dimensional \Cstar-algebras. We also define \emph{representation} of a bundle of algebras $\cA \to X$ on a Hermitian vector bundle $\cE \to X$ to be an injective morphism $\cA \to \End(\cE)$ of locally-trivial bundles of finite-dimensional \Cstar-algebras, in which case we call $\cE \to X$ an \emph{$\cA$-module}.
\end{definition}

\begin{remark}
The refinement of Serre--Swan applicable to what we call bundles of algebras, due to Boeijink--van Suijlekom \cite{BvS10}*{Thm.\ 3.8}, actually requires the slightly weaker notion of \emph{algebra bundle}, but the algebra bundles arising in this context are necessarily sub-bundles of a bundle of algebras, namely, the endomorphism bundle of some vector bundle, and are thus bundles of algebras.
\end{remark}

\begin{remark}
For simplicity of notation later on, we shall require bundles of algebra, without further qualification, to be locally trivial bundles of finite-dimensional \emph{complex} \Cstar-algebras. In physical applications, however, one needs \emph{real} bundles of algebras, that is, locally trivial bundles of finite-dimensional real \Cstar-algebras. All the definitions and results in this account hold equally well when real bundles of algebras are used, and we shall continue to remark on any differences from the complex case as they arise.
\end{remark}

\begin{definition}
Let $X$ be a compact oriented Riemannian manifold, and let $\cA \to X$ be a bundle of algebras. We define a \emph{Clifford $\cA$-module} to be a Clifford module $\cE \to X$ together a faithful \Star-representation of $\cA$ commuting with the Clifford action; if $\cE$ is \ZZ-graded, we require sections of $\cA$ to act as even operators on $\cE$.
\end{definition}

Our generalised (concrete) definition is therefore as follows:

\begin{definition}[\cite{Ca12}*{Def.\ 2.3}]
A \emph{concrete almost-commutative spectral triple} is a spectral triple of the form $(C^\infty(X,\cA),L^2(X,\cE),D)$, where $X$ is a compact oriented Riemannian manifold, $\cA \to X$ is a bundle of algebras, $\cE \to X$ is a Clifford $\cA$-module, and $D$ is a Dirac-type operator on $\cE$.
\end{definition}

This new concrete definition then lends itself to the following abstract analogue:

\begin{definition}[\cite{Ca12}*{Def.\ 2.16}]
Let $(A,H,D)$ be spectral triple, which may or may not be even, and let $B$ be a central unital \Star-subalgebra of $A$. We call $(A,H,D)$ a $p$-dimensional \emph{almost-commutative spectral triple} with \emph{base $B$} if the following conditions hold:
\begin{enumerate}
	\item $(B,H,D)$ is a Dirac-type $p$-dimensional commutative spectral triple;
	\item $A$ is a finitely generated projective unital $B$-module-\Star-subalgebra of $\End_B(\cHi)$, where $\cHi := \cap_k \dom D^k$;
	\item $[[D,b],a] = 0$ for all $a \in A$, $b \in B$.
\end{enumerate}
\end{definition}

In particular, a concrete almost-commutative spectral triple with base a compact oriented Riemannian $p$-manifold $X$ is a $p$-dimensional (abstract) almost-commutative spectral triple with base $C^\infty(X)$.

Given these definitions, we obtained a reconstruction theorem for almost-com\-mu\-ta\-tive spectral triples as a simple consequence of Thm.\ \ref{recon} \emph{via} Thm.\ \ref{rierecon}, together with the aforementioned refinement of Serre--Swan for algebra bundles \cite{BvS10}*{Thm.\ 3.8}.

\begin{theorem}[\cite{Ca12}*{Thm.\ 2.17}]\label{acrecon}
Let $(A,H,D)$ be a $p$-dimensional almost-com\-mu\-ta\-tive spectral triple with base $B$. Then there exist a compact oriented Riemannian $p$-manifold $X$, a bundle of algebras $\cA \to X$, and a Clifford $\cA$-module $\cE \to X$, such that $B \cong C^\infty(X)$ and $(A,H,D) \cong (C^\infty(X,\cA),L^2(X,\cE),D)$, where $D$ is identified with an essentially self-adjoint Dirac-type operator on $\cE$.
\end{theorem}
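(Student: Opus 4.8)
The plan is to derive the statement directly from Theorem~\ref{rierecon} --- hence, ultimately, from Connes's reconstruction theorem, Theorem~\ref{recon} --- together with the Boeijink--van Suijlekom refinement of Serre--Swan \cite{BvS10}*{Thm.\ 3.8}; essentially all the geometric content is already packaged into those two inputs, so what remains is to thread the bundle of algebras through them. First I would apply Theorem~\ref{rierecon} to the spectral triple $(B,H,D)$, which is a Dirac-type $p$-dimensional commutative spectral triple by hypothesis~(i). This produces a compact oriented Riemannian $p$-manifold $X$, a Hermitian vector bundle $\cE \to X$, and a unitary $U \colon H \to L^2(X,\cE)$ intertwining $(B,H,D)$ with $(C^\infty(X),L^2(X,\cE),D_0)$ for some essentially self-adjoint Dirac-type operator $D_0$ on $\cE$; in particular $B \cong C^\infty(X)$, the Riemannian metric on $X$ is recovered via $g(df,df) = -[D,f]^2$, and, since $\cHi := \cap_k \dom D^k$ depends only on $(H,D)$, the unitary $U$ carries $\cHi$ isomorphically onto $C^\infty(X,\cE)$ as a $B$-module, whence $\End_B(\cHi) \cong \End_{C^\infty(X)}(C^\infty(X,\cE)) = C^\infty(X,\End(\cE))$ as unital \Star-algebras. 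Because $D_0$ is Dirac-type, $\cE$ is automatically a Clifford module with Clifford action $c(df) = [D_0,f]$, and if $(A,H,D)$ is even its grading transports to the corresponding \ZZ-grading of $\cE$.

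Next, transport $A$ across these identifications. By hypothesis~(ii), $A$ becomes a finitely generated projective unital $C^\infty(X)$-submodule of $C^\infty(X,\End(\cE))$ that is closed under multiplication and the involution and contains the identity section; this is precisely the setting in which the refinement of Serre--Swan \cite{BvS10}*{Thm.\ 3.8} applies, yielding a locally trivial bundle of finite-dimensional \Cstar-algebras $\cA \to X$ --- a sub-bundle of algebras of $\End(\cE)$, hence a genuine bundle of algebras --- such that $A \cong C^\infty(X,\cA)$ and the inclusion $A \hookrightarrow C^\infty(X,\End(\cE))$ is induced by a fibrewise faithful \Star-representation $\cA \to \End(\cE)$. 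I would then check the two compatibility conditions defining a Clifford $\cA$-module: hypothesis~(iii), $[[D,b],a] = 0$ for all $a \in A$ and $b \in B$, says exactly that the $\cA$-action on $\cE$ commutes with the Clifford action $c(df) = [D_0,f]$; and if $(A,H,D)$ is even, then $A$ commutes with the grading $\gamma$ by definition of an even spectral triple, so under the identification $\cA$ acts by even endomorphisms of $\cE$. Since $D_0$ is the same Dirac-type operator throughout, this gives $(A,H,D) \cong (C^\infty(X,\cA),L^2(X,\cE),D_0)$ with $B \cong C^\infty(X)$, as required.

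The only non-formal step is the invocation of \cite{BvS10}*{Thm.\ 3.8}: one must know that a finitely generated projective $C^\infty(X)$-module \Star-subalgebra of $C^\infty(X,\End(\cE))$ is automatically the smooth-section algebra of a \emph{locally trivial} bundle of finite-dimensional \Cstar-algebras --- that is, that local triviality of the fibres is forced --- and that the resulting fibrewise representation on $\cE$ is itself locally trivial; granting this, every other step is bookkeeping. Finally, I would note that the real case is handled by the identical argument, replacing $C^\infty(X)$ by $C^\infty(X,\bR)$ and using real bundles of algebras throughout.
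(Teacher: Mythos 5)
Your proposal is correct and follows exactly the route the paper itself indicates for this theorem: apply Theorem~\ref{rierecon} to the base $(B,H,D)$ to reconstruct $X$ and the Hermitian Clifford module $\cE$, then invoke the Boeijink--van Suijlekom refinement of Serre--Swan to realise $A \subset \End_B(\cHi) \cong C^\infty(X,\End(\cE))$ as $C^\infty(X,\cA)$ for a sub-bundle of algebras $\cA \subset \End(\cE)$, with hypothesis (iii) and evenness supplying the Clifford $\cA$-module compatibilities. Your caveat about local triviality is precisely the point the paper addresses by noting that algebra bundles arising inside $\End(\cE)$ are automatically bundles of algebras, so no gap remains.
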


\begin{remark}
When $B$ and $A$ are real \Star-algebras, or if $\cA \to X$ is a real bundle of algebras, then one should replace $C^\infty(X)$ with $C^\infty(X,\bR)$. Note that, on the one hand, if $\cA \to X$ is a real bundle of algebras, \emph{i.e.}, a locally trivial bundle of finite-dimensional real \Cstar-algebras, then $C^\infty(X,\cA)$ will, in particular, be a real pre-\Cstar-algebra, whilst on the other, if $A$ is the algebra of an almost-commutative spectral triple with base $B$, then Boeijink--van Suijlekom's Serre--Swan theorem for algebra bundles, adapted to the real case, realises it as the algebra of sections of a sub-algebra bundle of the real endomorphism bundle $\End_{\bR}(\cE)$ of some Hermitian vector bundle $\cE$, and thus as a real pre-\Cstar-subalgebra of the real pre-\Cstar-algebra of sections of $\End_{\bR}(\cE)$.
\end{remark}

\section{Real structures on (almost-)commutative spectral triples}

We now consider real structures on commutative and almost-commutative spectral triples, in light of their respective reconstruction theorems.

\subsection{Spin structures and real structures}\label{plymen}

Let us begin by recalling the dif\-fer\-en\-tial-geometric motivation for the notion of real structures on spectral triples. In doing so, we shall also motivate the ``exotic'' $KO$-dimensions appearing in the framework of D\k{a}browski--Dossena for products of real spectral triples.

Recall that if $X$ is a compact oriented Riemannian manifold, we may define a finite rank Azumaya bundle $\bClp(X) \to X$ by
\[
	\bClp(X) := \begin{cases} \bCl(X), &\text{if $\dim X$ is even,}\\ \bCl^+(X), &\text{if $\dim X$ is odd.} \end{cases}
\]
The bundle $\bClp(X)$ admits a canonical $\bC$-linear anti-involution $\tau$ defined by
\[
	\tau(\xi_1 \cdots \xi_m) := (-1)^m \xi_m \cdots \xi_1, \quad m \in \begin{cases} \bN, &\text{if $\dim X$ is even,}\\ 2\bN, &\text{if $\dim X$ is odd,} \end{cases}  \quad \xi_i \in \Omega^1(X),
\]
so that if $\cE$ is a $\bClp(X)$-module with $\bClp(X)$-action denoted by \[c : \bClp(X) \to \End(\cE),\] then the dual bundle $\cE^\vee$ is also a $\bClp(X)$-module with $\bClp(X)$-action given by
\[
	c^\vee(\omega) := c(\tau(\omega))^T, \quad \omega \in C^\infty(X,\bClp(X)).
\]
This gives rise to the following noncommutative-geometric characterisation of spin$^\bC$ and spin manifolds, here translated into differential-geometric language:

\begin{theorem}[Plymen \cite{Pl86}*{\S 2}]
Let $X$ be compact oriented Riemannian.
\begin{enumerate}
	\item $X$ is spin$^\bC$ if and only if there exists an irreducible $\bClp(X)$-module.
	\item $X$ is spin if and only if there exists an irreducible $\bClp(X)$-module $\cS$ such that $\cS \cong \cS^\vee$ as $\bClp(X)$-modules.
\end{enumerate}
\end{theorem}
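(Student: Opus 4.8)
The plan is to recognise $\bClp(X)$ as the algebra bundle associated to the oriented orthonormal frame bundle of $X$ (here $n := \dim X$) through the conjugation action of $SO(n)$ on the complex Clifford algebra, and then to translate the two assertions into statements about lifting, respectively reducing, its structure group. The starting point is the classical fact that $\bCl_n \cong M_N(\bC)$ for $n$ even and $\bCl^+_n \cong M_N(\bC)$ for $n$ odd, with $N := 2^{\lfloor n/2 \rfloor}$; so in either case $\bClp(X)$ is a locally trivial bundle of full matrix algebras $M_N(\bC)$, and an irreducible $\bClp(X)$-module is exactly a Hermitian vector bundle $\cS \to X$ with $\bClp(X) \cong \End(\cS)$. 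First I would observe that, in a frame-adapted local trivialisation, the transition cocycle of $\bClp(X)$ is obtained from the $SO(n)$-cocycle of $TX$ by the conjugation representation $SO(n) \to \mathrm{Aut}(\bClp_n) = PU(N)$, which factors as the spin representation $\kappa \colon \mathrm{Spin}(n) \to U(N)$ followed by $\mathrm{Ad}\colon U(N) \to PU(N)$; hence $\bClp(X)$ is associated to the $PU(N)$-bundle $F$ underlying the frame bundle. An irreducible module $\cS$ amounts to a lift of $F$ along $\mathrm{Ad}$ to a $U(N)$-bundle $\widetilde F$, and since $\mathrm{Ad}\circ\kappa$ descends to $SO(n)$ one checks that the fibre product $SO(n)\times_{PU(N)}U(N)$ is canonically isomorphic to $\mathrm{Spin}^{\bC}(n)$; consequently the pair $(F,\widetilde F)$ assembles into a principal $\mathrm{Spin}^{\bC}(n)$-bundle over the frame bundle, that is, a spin$^\bC$ structure. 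Conversely, the spinor bundle $\cS = P\times_{\mathrm{Spin}^{\bC}(n)}\Delta$ of a spin$^\bC$ structure $P$ is visibly an irreducible $\bClp(X)$-module. This yields (1).

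For (2), suppose an irreducible $\cS$ exists, so $X$ is spin$^\bC$ with spinor bundle $\cS$ as above. The twisted dual $\cS^\vee$, with Clifford action $c^\vee(\omega) = c(\tau(\omega))^T$, is again fibrewise irreducible, so by Schur's lemma $\mathcal{L} := \Hom_{\bClp(X)}(\cS,\cS^\vee)$ --- the bundle whose fibre at $x$ is the space of $\bClp(X)_x$-module maps $\cS_x \to \cS^\vee_x$ --- is a line bundle, and $\cS \cong \cS^\vee$ as $\bClp(X)$-modules if and only if $\mathcal{L}$ admits a nowhere-vanishing section, every nonzero such section being automatically a module isomorphism. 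Now $\mathcal{L} = P\times_{\mathrm{Spin}^{\bC}(n)}\Hom_{\bClp_n}(\Delta,\Delta^\vee_\tau)$, and the last factor is the one-dimensional space spanned by the charge-conjugation bilinear form on $\Delta$; the group $\mathrm{Spin}^{\bC}(n) = (\mathrm{Spin}(n)\times U(1))/\langle(-1,-1)\rangle$ acts on it through the character $[\tilde g,z]\mapsto z^{-2}$, because $\mathrm{Spin}(n)$ preserves the form (the would-be sign is continuous in $\tilde g$, hence $+1$ by connectedness), while the central $U(1)$ scales $\Delta$ by $z$ and $\Delta^\vee_\tau$ by $z^{-1}$. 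The kernel of $[\tilde g,z]\mapsto z^{-2}$ is precisely the copy of $\mathrm{Spin}(n)$ inside $\mathrm{Spin}^{\bC}(n)$, so a nowhere-vanishing section of $\mathcal{L}$ is equivalent to a reduction of $P$ to $\mathrm{Spin}(n)$, that is, a spin structure; thus such a $\cS$ exists if and only if $X$ is spin. In the other direction, given a spin structure the $\mathrm{Spin}(n)$-invariant charge-conjugation form globalises directly to an isomorphism $\cS \cong \cS^\vee$, which also re-proves the easy implication of (1). One may note in passing that this argument reproduces the familiar obstructions: the Dixmier--Douady class of $\bClp(X)$ is the integral third Stiefel--Whitney class $W_3(X)$, whose vanishing characterises spin$^\bC$, while triviality of $\mathcal{L}$ forces $c_1(\mathcal{L}) \equiv w_2(X) \bmod 2$ to vanish.

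I expect the genuinely delicate points to be the two Clifford-algebraic inputs and their conventions: first, the identification $SO(n)\times_{PU(N)}U(N)\cong\mathrm{Spin}^{\bC}(n)$, which rests on $\bClp_n\cong M_N(\bC)$ together with the compatibility of the conjugation action with the spin representation; and second, the structure of $\Hom_{\bClp_n}(\Delta,\Delta^\vee_\tau)$, namely the existence and uniqueness up to scalar of the charge-conjugation form, its $\mathrm{Spin}(n)$-invariance, and the fact that the central $U(1)$ acts on it by $z^{-2}$ rather than some higher power. The odd-dimensional case requires a short but entirely parallel verification, using $\bCl^+_n\cong\bCl_{n-1}$, that passing to $\bClp(X)=\bCl^+(X)$ (rather than the non-matrix bundle $\bCl(X)$) disturbs neither the matrix-algebra structure, nor the restriction of $\tau$, nor the relation between $\mathrm{Spin}^{\bC}(n)$ and $U(N)$. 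Everything else --- Schur's lemma, the correspondence between nowhere-vanishing sections of an associated line bundle and reductions of the structure group to the kernel of the corresponding character, and the passage between $\bClp(X)$-modules and vector bundles --- is routine.
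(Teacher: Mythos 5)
The paper itself offers no proof of this statement---it is quoted verbatim as a theorem of Plymen, cited to \S 2 of \cite{Pl86}---so the only comparison available is with Plymen's cited argument, and your route is genuinely different from it. Plymen works operator-algebraically: $C(X,\bClp(X))$ is an Azumaya algebra over $C(X)$, spin$^\bC$-ness is the vanishing of its Dixmier--Douady class (equivalently $W_3(X)=0$, equivalently existence of a Morita-equivalence bimodule, i.e.\ an irreducible module bundle), and spin-ness is detected by the further vanishing of $w_2(X)$ via the self-duality $\cS \cong \cS^\vee$. You instead argue by structure-group liftings: $\bClp(X)$ is associated to the oriented orthonormal frame bundle through $SO(n) \to PU(N)$, an irreducible module is a $U(N)$-lift of the underlying $PU(N)$-bundle, and the identification $SO(n)\times_{PU(N)}U(N) \cong \mathrm{Spin}^\bC(n)$ converts such a lift into a spin$^\bC$ structure; for (2) you trivialise the Schur line bundle $\Hom_{\bClp(X)}(\cS,\cS^\vee)$, which is associated to the character $[\tilde g,z]\mapsto z^{-2}$ whose kernel is $\mathrm{Spin}(n)$, so a nowhere-vanishing section is a reduction to $\mathrm{Spin}(n)$. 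This is a correct and standard differential-geometric proof (close in spirit to the treatment in Gracia-Bond{\'\i}a--V\'arilly--Figueroa, Ch.~9); what it buys is explicitness about the bundles and conventions that the present paper actually uses (the twisted dual action $c^\vee = c\circ\tau$ transposed, the even subalgebra in odd dimensions), whereas Plymen's argument buys the cohomological obstruction classes with no case distinctions.

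One soft spot, which you partly flag yourself: the claim that $\mathrm{Spin}(n)$ acts trivially on the one-dimensional space spanned by the charge-conjugation form is not a connectedness statement about a ``sign''---a priori the action is by a $U(1)$-valued character, and connectedness alone does not kill it (indeed $\mathrm{Spin}(2)\cong U(1)$ has nontrivial characters). For $n\geq 3$ triviality follows because $\mathrm{Spin}(n)$ is perfect; uniformly in $n$ one should argue directly, e.g.\ that every element of $\mathrm{Spin}(n)$ is a product of an even number of unit vectors with real coefficients, each of which transforms the form by the same sign $\epsp$, so the even product acts trivially. With that point repaired, the proposal is complete modulo the routine Clifford-algebra verifications you list.
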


Since $\bClp(X)$ has rank $2^{\lfloor p/2\rfloor}$ when $X$ is $p$-dimensional, part (1) of the above theorem is what allows one to obtain Cor.~\ref{spincrecon} from Thms.~\ref{recon} and \ref{babyrecon}.

Now, for $\epsp = \pm 1$, define a $\bC$-linear anti-involution $\tau_\epsp$ on $\bCl(X)$ by \[\tau_\epsp|_{\Omega^1(X)} = -\epsp \Id_{\Omega^1(X)};\]
by construction, $\tau_{\epsp}$ defines an extension to $\bCl(X)$ of $\tau$ on $\bClp(X)$. Since for a Hermitian vector bundle $\cE$, the dual bundle $\cE^\vee$ is canonically isomorphic to the conjugate bundle $\overline{\cE}$, it is traditional in the noncommutative-geometric literature to reformulate Plymen's characterisation of spin manifolds as follows:

\begin{corollary}[cf.~\cite{GBVF}*{Thms.~9.6, 9.20}]\label{plymenspin}
Let $X$ be a compact oriented Riemannian $n$-manifold. Then $X$ is spin if and only if there exists an irreducible Clifford module $\cS$ together with an antiunitary bundle endomorphism $C$ on $\cS$ satisfying
\begin{enumerate}
	\item $C^2 = \eps \Id_\cS$,
	\item $C c(\omega^\ast) C^\ast = c(\tau_\epsp(\omega))$ for all $\omega \in C^\infty(X,\bCl(X))$,
	\item $C \chi = \epspp \chi C$ for $\chi \in C^\infty(X,\bCl(X))$ the chirality element, when $n$ is even,
\end{enumerate}
where $(\eps,\epsp,\epspp) := (\eps(n),\epsp(n),\epspp(n)) \in \set{\pm 1}^3$ are determined by $n \bmod 8$ as follows (with $\epspp \equiv 1$ is suppressed for $n$ odd):
\begin{equation}\label{kotable}
	\begin{array}{ccccccccc}
	\toprule
	n & 0 & 1 & 2 & 3 & 4 & 5 & 6 & 7\\
	\midrule
	\eps(n) & + & + & - & - & - & - & + & + \\
	\epsp(n) & + & - & + & + & + & - & + & + \\
	\epspp(n) & + & & - & & + & & - & \\
	\bottomrule
	\end{array}
\end{equation}
\end{corollary}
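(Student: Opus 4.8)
The plan is to derive this from the preceding theorem of Plymen via the standard identification of the dual of a Hermitian bundle with its conjugate, together with the classical fibrewise analysis of the charge conjugation operator on the complex spin representation (see \cite{GBVF}*{\S 9}). Only the forward implication carries real content; the reverse is almost immediate.

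For the reverse implication, suppose an irreducible Clifford module $\cS$ and an antiunitary bundle endomorphism $C$ satisfying (1)--(3) are given. I would restrict condition (2) to $\omega \in C^\infty(X,\bClp(X))$ and use the fact that on $\bClp(X)$ the anti-involution $\tau_\epsp$ agrees with $\tau$ irrespective of $\epsp$: for $n$ even because $\bClp(X) = \bCl(X)$, $\tau|_{\Omega^1(X)} = -\Id$, and $\epsp(n) = +1$; for $n$ odd because $\tau_{+1}$ and $\tau_{-1}$ already coincide on the even subalgebra $\bCl^+(X) = \bClp(X)$. Hence the bundle map $\cS \to \overline{\cS} \cong \cS^\vee$ determined by $C$ and the Hermitian structure is a $\bClp(X)$-module isomorphism $\cS \cong \cS^\vee$. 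Since an irreducible Clifford module restricts to an irreducible $\bClp(X)$-module --- for $n$ odd, the two inequivalent irreducible $\bCl(X)$-modules both restrict to the unique irreducible $\bClp(X)$-module --- Plymen's theorem yields that $X$ is spin. (Conditions (1) and (3) are not needed for this direction.)

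For the forward implication, suppose $X$ is spin. Plymen's theorem furnishes an irreducible $\bClp(X)$-module $\cS$ together with a $\bClp(X)$-module isomorphism $\cS \xrightarrow{\sim} \cS^\vee$. I would equip $\cS$ with a Clifford-compatible Hermitian structure (which exists by a standard partition-of-unity argument), identify $\cS^\vee$ with the conjugate bundle $\overline{\cS}$, and --- using that by irreducibility the isomorphism $\cS \cong \cS^\vee$ may be rescaled to an isometry --- obtain from it an antiunitary bundle endomorphism $C$ of $\cS$. Unwinding the definition $c^\vee(\omega) = c(\tau(\omega))^T$ of the $\bClp(X)$-action on $\cS^\vee$ shows that $C$ intertwines $c(\omega^\ast)$ with $c(\tau(\omega))$ on $\bClp(X)$; for $n$ even this is condition (2) with $\epsp = +1 = \epsp(n)$, while for $n$ odd I would extend the $\bCl^+(X)$-action on $\cS$ to a $\bCl(X)$-action and check that compatibility of $C$ with the complex volume element forces $\epsp = \epsp(n)$, giving condition (2) on all of $\bCl(X)$. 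Finally, since $\cS$ is irreducible, $C^2$ and --- for $n$ even --- the operator $C\chi C^{-1}\chi^{-1}$ are scalars of modulus one, and conjugate-linearity of $C$ forces each to be $\pm 1$; identifying these signs with $\eps(n)$ and $\epspp(n)$ is the standard pointwise computation for the charge conjugation operator on $\bC^{2^{\lfloor n/2 \rfloor}}$, which is $8$-periodic in $n$ by Bott periodicity of the Clifford algebras. Globalising the whole construction over $X$ is routine given local triviality.

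The step I expect to require the most care is the sign bookkeeping in the odd-dimensional case, where $\tau_\epsp$ genuinely depends on $\epsp$ on the full Clifford bundle even though its restriction to $\bClp(X)$ does not: one must verify that extending the Clifford action to $\bCl(X)$ and tracking the action of $C$ on the volume element reproduce exactly the values of $\epsp(n)$ in \eqref{kotable}, consistently with $C^2 = \eps(n)\Id_\cS$. This is, however, a finite algebraic check carried out fibrewise and transported to $X$ by local triviality; the substantive input is Plymen's theorem together with Bott periodicity.
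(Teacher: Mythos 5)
Your proposal is correct and follows essentially the same route the paper takes: the corollary is presented there as a folkloric reformulation of Plymen's theorem via the canonical identification $\cS^\vee \cong \overline{\cS}$, with the $KO$-dimensional sign table deferred to the standard fibrewise charge-conjugation computations of \cite{GBVF}*{\S 9}, which is precisely the structure of your argument. The only part you leave to a computation --- the sign bookkeeping for $\eps$, $\epsp$, $\epspp$, especially the odd-dimensional extension to $\bCl(X)$ --- is exactly what the paper also leaves to the cited reference, and you have correctly identified both the reduction to Plymen's criterion and where the genuine content lies.
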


The above folkloric result is the origin of Connes's notion of real structures on spectral triples, and in particular, the above table is the origin of the notion of the $KO$-dimension of a real spectral triple.

\begin{remark}
Condition (2) in the above result can be viewed as specifying the compatibility of $C$ with the Clifford action on $\cS$, for $C$, \emph{a priori}, defines a $\bC$-linear anti-involution $T \mapsto C T^\ast C^\ast$ on $\End(\cS)$.
\end{remark}

\begin{remark}
Suppose that $X$ is spin, and that $\cS$ and $C$ are as above. Then $\slashed{D}C = \epsp C \slashed{D}$ for $\slashed{D}$ the Dirac operator on $\cS$.
\end{remark}

Finally, suppose that $X$ is a compact spin $n$-manifold for $n$ even, and that $\cS$ and $C$ are as given in the above corollary; in particular, we necessarily have that $\epsp = 1$. Let $C_- = C \chi$. Then $C_-$ is an antiunitary bundle automorphism on $\cS$ satisfying
\begin{enumerate}
	\item $C_-^2 = \eps_- \Id_\cS$,
	\item $C_- c(\omega^\ast) C_-^\ast = c(\tau_\epsp(\omega))$ for all $\omega \in C^\infty(X,\bCl(X))$,
	\item $C_- \chi = \epspp_- C_- \chi$, when $n$ is even,
\end{enumerate}
for $(\eps_-,\epsp_-,\epspp_-) := (\eps\epspp,-1,\epspp)$. Thus, as D\k{a}browski--Dossena first observed, one could readily expand the above table to
\begin{equation}\label{newkotable}
	\begin{array}{ccccccccccccc}
	\toprule
	n & 0_+ & 0_- & 1 & 2_+ & 2_- & 3 & 4_+ & 4_- & 5 & 6_+ & 6_- & 7 \\
	\midrule
	\eps(n) 	& + & + & + & - & + & - & - & - & - & + & - & +\\
	\epsp(n) & + & - & - & + & - & + & + & - & - & + & - & +\\
	\epspp(n) & + & + & & - & - & & + & + & & - & - & \\
	\bottomrule
	\end{array}
\end{equation}
where for $n$ even, $n_+$ and $n_-$ denote the two (interchangeable!) possibilities, namely $n_+$ the ``conventional'' $KO$-dimension and $n_-$ the new ``exotic'' $KO$-dimension. Since replacing $C$ with $C\chi$ takes us reversibly between $n_+$ and $n_-$~\cite{DaDo}*{\S 2.3}, the ``exotic'' $KO$-dimensions would seem to offer nothing more than additional notational flexibility. However, as D\k{a}browski--Dossena show (\emph{v.\ infra}), we will need to consider both possibilities simultaneously in order to define consistently products of real spectral triples.

\subsection{Real spectral triples}

Keeping in mind the example of the spinor bundle with Dirac operator and charge conjugation operator over a compact spin manifold, the discussion above generalises even further to the noncommutative setting of spectral triples:

\begin{definition}
A \emph{real spectral triple of $KO$-dimension $n \bmod 8$} is a spectral triple $(A,H,D)$, even with \ZZ-grading $\gamma$ if $n$ is even, together with an antiunitary $J$ on $H$ satisfying:
\begin{enumerate}
	\item $J^2 = \eps \Id_H$,
	\item $DJ = \epsp JD$,
	\item $J \gamma = \epspp \gamma J$ (if $n$ is even),
\end{enumerate}
for $(\eps,\epsp,\epspp) := (\eps(n),\epsp(n),\epspp(n)) \in \set{\pm 1}^3$ depending on $n \bmod 8$ according to Table~\ref{kotable}.
\end{definition}

Before continuing on to examples, it is worth mentioning that a real spectral triple $(A,H,D,J)$ defines, in particular, a canonical central unital \Star-subalgebra of $A$, a fact that will be key to our discussion of real commutative and almost commutative spectral triples:

\begin{lemma}[cf.\ \cite{CCM}*{Prop.~3.1}]\label{real}
Let $(A,H,D,J)$ be a real spectral triple. Then $\tilde{A}_J := \set{a \in A \mid Ja^\ast J^\ast = a}$ defines a central unital \Star-subalgebra of $A$.
\end{lemma}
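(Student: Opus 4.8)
The plan is to verify the three required properties in turn: that $\tilde{A}_J$ is closed under the algebra operations, that it is central in $A$, and that it is a $\ast$-subalgebra containing the unit. Throughout I would use the defining involution $a \mapsto Ja^\ast J^\ast$, which by antiunitarity of $J$ and the order-zero condition $[a,JbJ^\ast]=0$ (implicit in the real-spectral-triple axioms, and in any case part of the cited \cite{CCM}*{Prop.~3.1}) is a well-behaved operation on $B(H)$.

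First I would record that the map $\Phi \colon a \mapsto Ja^\ast J^\ast$ is a unital $\bC$-linear (indeed, since $J$ is antilinear, $\bC$-linear rather than conjugate-linear, as the two conjugations cancel) anti-automorphism of $B(H)$, and $\Phi^2 = \Id$ because $J^2 = \pm\Id_H$ and the signs cancel in $JJa^{\ast\ast}J^\ast J^\ast = (\pm1)^2 a$. Consequently $\Phi$ restricts to an involutive $\bC$-linear anti-automorphism of $A$ (using that $A$ is a $\ast$-algebra and, crucially, that $JAJ^\ast \subseteq A'' $ so that the expression stays inside $A$ — this is exactly where one needs the standing hypotheses on a real spectral triple, and I would cite \cite{CCM} for the fact that $\tilde{A}_J \subseteq A$). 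Then $\tilde{A}_J = \set{a \in A \mid \Phi(a) = a}$ is the fixed-point set of $\Phi$. Linearity of $\Phi$ gives that $\tilde A_J$ is a linear subspace; $\Phi(1) = 1$ gives $1 \in \tilde A_J$; and for $a, b \in \tilde A_J$ we have $\Phi(ab) = \Phi(b)\Phi(a) = ba$, which is \emph{not} obviously $ab$ — so the multiplicative closure is where the real content lies.

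The resolution of that apparent obstacle is the key observation, and it is simultaneously the proof of centrality: I claim every element of $\tilde A_J$ is central in $A$. Indeed, for $a \in \tilde A_J$ and arbitrary $b \in A$, the order-zero relation $[b, Ja^\ast J^\ast] = 0$ reads $[b,a] = 0$ since $Ja^\ast J^\ast = a$; as $b$ was arbitrary, $a$ is central. Hence for $a,b \in \tilde A_J$ both are central, so $ab = ba$, and the computation $\Phi(ab) = ba = ab$ shows $ab \in \tilde A_J$; this closes multiplication. For the $\ast$-operation, if $a \in \tilde A_J$ then applying $\Phi$ to $a = Ja^\ast J^\ast$ and using $\Phi^2 = \Id$ gives $a^\ast = Ja J^\ast \cdot$, more carefully: from $\Phi(a) = a$ one gets $\Phi(a^\ast) = (\Phi(a))^\ast = a^\ast$ because $\Phi$ is a $\ast$-map (it is an anti-automorphism commuting with $\ast$, as $\Phi(a)^\ast = (Ja^\ast J^\ast)^\ast = Ja J^\ast = \Phi(a^\ast)$), so $a^\ast \in \tilde A_J$. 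Thus $\tilde A_J$ is a unital $\ast$-subalgebra of $A$ consisting of central elements, i.e.\ a central unital $\ast$-subalgebra.

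The only genuine subtlety — and the step I would flag as needing care rather than ingenuity — is the bookkeeping of which ambient axioms license the manipulations $JAJ^\ast \subseteq A''$ and $[A, JAJ^\ast] = 0$; these are standard consequences of the real-structure and order-zero conventions but are not spelled out in the excerpt's definition of real spectral triple, so I would either invoke them explicitly as part of the definition in use or cite \cite{CCM}*{Prop.~3.1} wholesale, observing that the present statement is exactly that proposition transported to our conventions. No analytic input (regularity, the Dirac operator, dimension) is needed: the lemma is purely algebraic once the order-zero condition is in hand.
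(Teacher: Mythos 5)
Your proof is correct and follows essentially the same route the paper intends: the paper gives no proof of this lemma, deferring entirely to \cite{CCM}*{Prop.~3.1}, and you rightly identify that the whole content is the order-zero condition $[a,Jb^\ast J^\ast]=0$ (omitted from the paper's displayed definition of real spectral triple but tacitly assumed throughout, e.g.\ in the bimodule structure $(a\otimes b^o)\xi = aJb^\ast J^\ast\xi$), from which centrality of $\tilde{A}_J$ and hence multiplicative closure follow. One small correction: the appeal to $JAJ^\ast \subseteq A''$ ``so that the expression stays inside $A$'' is neither needed nor quite to the point, since $\tilde{A}_J$ is by definition a subset of $A$ consisting of fixed points of $\Phi(a)=Ja^\ast J^\ast$; closure under sums, products, adjoints and the unit only uses that $\Phi$ is a $\bC$-linear, $\ast$-preserving, involutive anti-multiplicative map on $B(H)$ together with the centrality you derive from the order-zero condition.
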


As mentioned above, by \S \ref{plymen}, we have that a compact spin $p$-manifold $X$ with spinor bundle $\cS$, charge conjugation $C$, and Dirac operator $\slashed{D}$ does give rise to a real spectral triple $(C^\infty(X),L^2(X,\cS),\slashed{D},C)$ of $KO$-dimension $p \bmod 8$, the canonical, motivating example of a real spectral triple. Indeed, we have the following well-known consequence of the reconstruction theorem for commutative spectral triples, the original form of Theorem~\ref{babyrecon}:

\begin{corollary}[Connes \cite{Con96}, Gracia-Bond{\'\i}a--V\'arilly--Figueroa \cite{GBVF}*{Thm.\ 11.2}]
Let $(A,H,D)$ be a strongly orientable $p$-dimensional commutative spectral triple such that $A^{\prime\prime}$ acts on $H$ with multiplicity $2^{\lfloor p/2\rfloor}$, so that $(A,H,D) \cong (C^\infty(X),L^2(X,\cS),D)$ for $X$ a compact spin$^\bC$ $p$-manifold and $\cS \to X$ a spinor bundle, with $D$ identified with an essentially self-adjoint Dirac-type operator on $\cS$. If, in addition, there exists an antiunitary $J$ making $(A,H,D,J)$ a real spectral triple of $KO$-dimension $p \bmod 8$, with $JaJ^\ast = a^\ast$ for $a \in A \cong C^\infty(X)$, then $X$ is spin, $\cS$ is the spinor bundle on $X$, $J$ is the charge conjugation on $\cS$, and $D = \slashed{D} + M$ for $\slashed{D}$ the Dirac operator on $X$ and $M$ a suitable symmetric bundle endomorphism on $\cS$.
\end{corollary}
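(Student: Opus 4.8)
The plan is to deduce the result from Plymen's characterisation of spin structures in the form of Corollary~\ref{plymenspin}. By Corollary~\ref{spincrecon}, the multiplicity hypothesis already furnishes the compact spin$^{\bC}$ $p$-manifold $X$, the spinor bundle $\cS \to X$ (an irreducible $\bClp(X)$-module, since $\bClp(X)$ has rank $2^{\lfloor p/2 \rfloor}$, the given multiplicity), and a unitary equivalence $(A,H,D) \cong (C^\infty(X),L^2(X,\cS),D)$ with $D$ an essentially self-adjoint Dirac-type operator; under it one has $\cHi = C^\infty(X,\cS)$, the bounded operator $[D,f]$ is the Clifford action $c(df)$ for $f \in C^\infty(X)$, and --- by strong orientability --- the $\ZZ$-grading $\gamma$ equals the action $c(\chi)$ of the chirality element $\chi$ of $\bCl(X)$. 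It then suffices to produce out of $J$ an antiunitary bundle endomorphism $C$ of $\cS$ satisfying the three conditions of Corollary~\ref{plymenspin} with $(\eps,\epsp,\epspp)$ the values prescribed by Table~\ref{kotable} at $n = p$; the remaining assertions ($X$ spin, $\cS$ \emph{the} spinor bundle, $J$ the charge conjugation) then follow at once, leaving only $D = \slashed{D} + M$.

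The crux is to show that $J$ is realised by a smooth antiunitary bundle map on $\cS$, i.e.\ that it is a zeroth-order operator. Applying the hypothesis $JfJ^{\ast} = f^{\ast}$ (for $f \in A = C^\infty(X)$) to a vector of the form $J\eta$ and using $J^{\ast}J = 1$ gives $J(f\eta) = \bar{f}\,J\eta$ for all $\eta \in H$, so $J$ is conjugate-$C^\infty(X)$-linear, and extends to a conjugate-$C(X)$-linear map since $J$ is bounded. On the other hand, iterating $DJ = \epsp JD$ yields $D^{m}J = \epsp^{m}JD^{m}$, whence $J(\dom D^{m}) = \dom D^{m}$ for all $m$ and therefore $J\cHi = \cHi$. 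Thus $J$ restricts to a conjugate-$C^\infty(X)$-linear endomorphism of the finitely generated projective module $\cHi = C^\infty(X,\cS)$, which, by the conjugate-linear form of the Serre--Swan theorem, is given by a unique smooth antilinear bundle endomorphism $C$ of $\cS$, characterised by $(J\xi)(x) = C_{x}(\xi(x))$; antiunitarity of $J$, together with the fibrewise Hermitian metric recovered from absolute continuity, forces each $C_{x}$ to be antiunitary. Since $C$ and $J$ then coincide as operators on $L^{2}(X,\cS) = H$, the axioms of a real spectral triple transcribe into bundle-theoretic statements about $C$.

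Indeed, condition~(1) of Corollary~\ref{plymenspin}, $C^{2} = \eps\,\Id_{\cS}$, is precisely $J^{2} = \eps\,\Id_{H}$, and condition~(3), the compatibility of $C$ with $\chi$ for $p$ even, follows from $J\gamma = \epspp\gamma J$ and $\gamma = c(\chi)$. For condition~(2), the compatibility of $C$ with the Clifford action, both sides of the required identity $Cc(\omega^{\ast})C^{\ast} = c(\tau_{\epsp}(\omega))$ are $\bC$-linear and anti-multiplicative in $\omega$, so it is enough to treat $\omega = df$, $f \in C^\infty(X,\bR)$ (using that $\bCl(X)$ is generated by $\Omega^{1}(X)$); here, using $JDJ^{\ast} = \epsp D$ (from $DJ = \epsp JD$) and $JfJ^{\ast} = f^{\ast} = f$,
\[
	Jc(df)J^{\ast} = J[D,f]J^{\ast} = \bigl[JDJ^{\ast},\,JfJ^{\ast}\bigr] = [\epsp D,\,f] = \epsp[D,f] = \epsp\,c(df),
\]
and one checks that this is exactly what condition~(2) demands, the value of $\epsp$ in Table~\ref{kotable} being calibrated so that the signs match. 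By Corollary~\ref{plymenspin}, $X$ is spin, $\cS$ is its spinor bundle, and $J = C$ is the charge conjugation on $\cS$.

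It remains to write $D = \slashed{D} + M$. Both $D$ and the Dirac operator $\slashed{D}$ of the resulting spin structure are essentially self-adjoint Dirac-type operators on $\cS$ inducing the \emph{same} Clifford action, $[D,f] = c(df) = [\slashed{D},f]$; hence they have the same principal symbol, so $M := D - \slashed{D}$, computed on the common smooth core $C^\infty(X,\cS)$, is a smooth bundle endomorphism of $\cS$, symmetric because $D$ and $\slashed{D}$ are. From $DJ = \epsp JD$ and the standard relation $\slashed{D}J = \epsp J\slashed{D}$ for the Dirac operator and charge conjugation (cf.\ \S\ref{plymen}) one gets $MJ = \epsp JM$, and for $p$ even, $\gamma D = -D\gamma$ (strong orientability) with $\gamma\slashed{D} = -\slashed{D}\gamma$ gives $M\gamma = -\gamma M$; so $M$ is a suitable symmetric bundle endomorphism and $D = \slashed{D} + M$. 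The one genuinely substantive point is the second paragraph --- that the hypothesis $JaJ^{\ast} = a^{\ast}$ is exactly what forces $J$ to be a zeroth-order operator, hence a bundle map; everything else is a translation through the reconstruction theorem, with the $KO$-dimensional signs handled for us by Corollary~\ref{plymenspin}.
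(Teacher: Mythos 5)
Your proposal is correct and follows exactly the intended route: the paper gives no proof of this corollary (it is cited from Connes and Gracia-Bond\'{\i}a--V\'arilly--Figueroa), but your argument --- realising $J$ as a smooth antiunitary bundle automorphism via the conjugate-linear Serre--Swan identification $C^\infty(X,\cS) \to C^\infty(X,\overline{\cS})$, transcribing the real-spectral-triple axioms into the conditions of Corollary~\ref{plymenspin}, and writing $M = D - \slashed{D}$ as a symmetric bundle endomorphism by equality of principal symbols --- is precisely the standard proof and the same technique the paper itself uses in Proposition~\ref{rcrecon}. The only points stated rather than checked (the identification of $\gamma$ with the chirality action and the sign bookkeeping using $\xi^\ast = -\xi$ for real one-forms in the skew-adjoint Clifford convention) are routine and do not affect correctness.
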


Now, just as in \S \ref{plymen}, in the case that $n$ is even, we can go reversibly from the ``conventional'' $KO$-dimension $n_+$ to the ``exotic'' $KO$-dimension $n_-$ by replacing $J$ by $J\gamma$, so that we can expand Table \ref{kotable} to Table \ref{newkotable} for free. By abuse of notation and terminology, then, we shall say that $(A,H,D,\gamma,J)$ is \emph{of $KO$-dimension $n_+ \bmod 8$} if $(\eps,\epsp,\epspp)$ is given by $n_+$ in the above table, and that it is \emph{of $KO$-dimension $n_- \bmod 8$} if $(\eps,\epsp,\epspp)$ is given by $n_-$ instead. Indeed, we shall find the following definition convenient:

\begin{definition}
Let $(A,H,D,\gamma,J)$ be a real spectral triple of $KO$-dimension $n \bmod 8$ for $n$ even, and let $\beta \in \set{\pm 1}$.
\begin{itemize}
	\item If $\beta = 1$, then $J_\beta$ is the element of $\set{J,J\gamma}$ such that $(A,H,D,\gamma,J_+)$ has $KO$-dimension $n_+ \bmod 8$;
	\item If $\beta = -1$, then $J_-$ is the element of $\set{J,J\gamma}$ such that $(A,H,D,\gamma,J_-)$ has $KO$-dimension $n_- \bmod 8$.
\end{itemize}
\end{definition}

Thus, we are free to identify a real spectral triple $(A,H,D,\gamma,J)$ of even $KO$-dimension $n \bmod 8$ simultaneously with the real spectral triple $(A,H,D,\gamma,J_1)$ of $KO$-dimension $n_+ \bmod 8$ and the real spectral triple $(A,H,D,\gamma,J_{-1})$ of $KO$-dimension $n_- \bmod 8$.

\subsection{Commutative spectral triples}

Let us now consider the case of real \emph{commutative} spectral triples. We have already seen the example of the canonical real spectral triple of a compact spin manifold with fixed spin structure. However, the canonical spectral triple of a compact oriented Riemannian manifold (\emph{v.\ supra}) immediately gives rise to a canonical real spectral triple of $KO$-dimension $0 \bmod 8$, a seemingly trivial example which shall prove quite instructive indeed:

\begin{example}\label{cheapex}
Let $X$ be a compact oriented Riemannian manifold. Since the operators $d+d^\ast$ and $(-1)^{\abs{\cdot}}$ on $\wedge T^\ast_\bC X$ are simply straightforward $\bC$-linear extensions of operators on the real exterior bundle $\wedge T^\ast X$, we can realise the Hodge--de Rham spectral triple of $X$ as a real triple $(C^\infty(X),L^2(X,\wedge T^\ast_\bC X),d+d^\ast,(-1)^{\abs{\cdot}},K)$ of $KO$-dimension $0 \bmod 8$, where $K$ is the complex conjugation operator on $\wedge T^\ast_\bC X$ \emph{qua} complexification of the real vector bundle $\wedge T^\ast X$.
\end{example}

In light of this last example, we already see that a generalisation of the ``charge conjugation'' operator of Cor.~\ref{plymenspin} can be usefully defined on more general Clifford modules:

\begin{definition}
Let $X$ be a compact oriented Riemannian manifold, let $\cE \to X$ be a Clifford module, which may or may not be \ZZ-graded. Let $J$ be an antiunitary bundle automorphism on $\cE$, and let $n \in \bZ_8$. We call $(\cE,J)$ a \emph{real Clifford module of $KO$-dimension $n \bmod 8$} if $\cE$ is \ZZ-graded with \ZZ-grading $\gamma$ when $n$ is even, and $C$ satisfies the following:
\begin{enumerate}
	\item $J^2 = \eps \Id_\cE$,
	\item $Jc(\omega^\ast)J^\ast = c(\tau_{\epsp}(\omega))$ for all $\omega \in C^\infty(X,\bCl(X))$,
	\item $J\gamma = \epspp \gamma J$ if $n$ is even,
\end{enumerate}
where $(\eps,\epsp,\epspp) \in \set{\pm 1}^3$ is determined by $n \bmod 8$ according Table~\ref{kotable}.
\end{definition}

\begin{remark}
Just as in the spinor case, if $n$ is even, then we can replace $J$ with $J\gamma$ to go reversibly between the ``conventional'' $KO$-dimension $n_+$ and the ``exotic $KO$-dimension'' $n_-$.
\end{remark}

In both examples, we have a real triple of the form $(C^\infty(X),L^2(X,\cE),D,J)$, where $X$ is a compact oriented Riemannian manifold, $(\cE,J)$ is a real Clifford module, and $D$ is a Dirac-type operator on $\cE$ compatible with $J$ in the following sense:

\begin{definition}
Let $(\cE,J)$ be a real Clifford module of $KO$-dimension $n \bmod 8$ over a compact oriented Riemannian manifold $X$. Let $D$ be a Dirac-type operator on $\cE$. We shall call $D$ \emph{$J$-compatible} if $DJ = \epsp J D$.
\end{definition}

Thus, if $(\cE,J)$ is a real Clifford module of $KO$-dimension $n \bmod 8$ over $X$, and $D$ is a Dirac-type operator on $\cE$, then $(C^\infty(X),L^2(X,\cE),D,J)$ is a real spectral triple of $KO$-dimension $n \bmod 8$ if and only if $D$ is $J$-compatible---let us call such a real spectral triple a \emph{concrete real commutative spectral triple}. One can therefore ask if a Dirac-type operator on $\cE$ is necessarily $J$-compatible. As it turns out, the answer is yes, up to perturbation by a symmetric bundle endomorphism:

\begin{proposition}
Let $X$ be a compact oriented Riemannian manifold, let $(\cE,J)$ be a real Clifford module on $X$ of $KO$-dimension $n \bmod 8$, and let $D$ be a symmetric Dirac-type operator on $\cE$. Then there exists a unique symmetric bundle endomorphism $M$ on $\cE$ such that $D-M$ is a $J$-compatible Dirac-type operator, and $MJ = - \epsp JM$.
\end{proposition}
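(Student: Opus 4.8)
The plan is to obtain $M$ by ``$J$-symmetrising'' $D$. Set $D' := \epsp\, J D J^\ast$; the candidate will be $M := \tfrac12(D - D')$, so that $D - M = \tfrac12(D + D')$ is the ``$J$-compatible part'' of $D$.

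The first and main step is to show that $D'$ is again a symmetric Dirac-type operator on $\cE$, odd with respect to $\gamma$ when $\cE$ is $\ZZ$-graded. That $D'$ is a first-order differential operator is clear, since $J$ is a bundle map; symmetry follows from $D^\ast = D$ and $(J^\ast)^\ast = J$; and oddness follows from $J\gamma = \epspp\gamma J$ together with $D\gamma = -\gamma D$ and $\epspp^2 = 1$. The substantive point is the computation of the principal symbol: since $J$ and $J^\ast$ are antilinear bundle maps, a short manipulation gives $[D',f] = \epsp\, J c(\overline{df}) J^\ast$ for $f \in C^\infty(X)$, and condition (2) in the definition of a real Clifford module (which relates $J c(\omega^\ast) J^\ast$ to $c(\tau_{\epsp}(\omega))$, with $\tau_{\epsp}|_{\Omega^1(X)} = -\epsp\Id$) then forces $[D',f] = c(df)$, \emph{viz}, the Clifford action carried by $\cE$ itself. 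Hence $D'$ has the same principal symbol as $D$: it is elliptic and symmetric, so essentially self-adjoint on $C^\infty(X,\cE)$, and $[D',f]^2 = c(df)^2 = -g(df,df)$, so it is indeed a Dirac-type operator. This is where the sign bookkeeping is delicate --- one must confirm that $D'$ carries the \emph{same} Clifford action as $D$, not a sign-twisted or conjugated one --- and the cleanest consistency check is that the construction reproduces the classical identity $\slashed{D} = \epsp\, C \slashed{D} C^\ast$ on the spinor bundle of a compact spin manifold.

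Granting this, $D - D'$ is a first-order differential operator commuting with every multiplication operator, hence $C^\infty(X)$-linear, hence a bundle endomorphism; so $M = \tfrac12(D - D')$ is a symmetric bundle endomorphism (odd when $\cE$ is $\ZZ$-graded, being a difference of odd operators), and by the previous step $D - M = \tfrac12(D + D')$ is a symmetric Dirac-type operator. The two remaining identities are then short computations: using $J^2 = \eps\Id$ one finds $\epsp J D' J^\ast = D$, whence $D - M = \tfrac12(D + D')$ satisfies $(D-M)J = \epsp J(D-M)$, i.e.\ it is $J$-compatible; and using $J^\ast J = \Id$ one finds $D'J = \epsp JD$, whence $MJ = \tfrac12(DJ - \epsp JD) = -\epsp JM$.

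Finally, for uniqueness, suppose $M_1$ and $M_2$ both have the stated properties. Subtracting the $J$-compatibility relations for $D - M_1$ and $D - M_2$ yields $(M_1 - M_2)J = \epsp J(M_1 - M_2)$, while subtracting the relations $M_iJ = -\epsp JM_i$ yields $(M_1 - M_2)J = -\epsp J(M_1 - M_2)$; together these give $J(M_1 - M_2) = 0$, so $M_1 = M_2$ since $J$ is invertible.
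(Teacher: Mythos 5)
Your proposal is correct and follows essentially the same route as the paper: the same candidate $M = \tfrac{1}{2}\bigl(D - \epsp J D J^\ast\bigr)$, the same key step of using condition (2) of the real Clifford module definition (in the paper, directly via $J[D,f]J^\ast = \epsp[D,f]$ for real-valued $f$) to show that $D - \epsp JDJ^\ast$ commutes with multiplication operators and is hence a bundle endomorphism, and the same uniqueness argument that the difference of two such $M$'s both commutes and anticommutes with the invertible $J$. The only cosmetic difference is that you package the symbol computation as ``$D' = \epsp JDJ^\ast$ is a Dirac-type operator for the same Clifford action,'' which the paper leaves implicit.
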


\begin{proof}
For any $f \in C^\infty(X,\bR)$, $J [D,f] J^\ast = J c(df) J^\ast = \epsp c(df) = \epsp[D,f]$, and hence $ [D - \epsp J D J^\ast,f] = 0$. Thus, $M = \tfrac{1}{2}(D - \epsp J D J^\ast)$ is a symmetric bundle endomorphism, so that $D - M = \tfrac{1}{2}(D + \epsp JDJ^\ast)$ is a symmetric Dirac-type operator; if $n$ is even, so that $\cE$ is \ZZ-graded, then $D-M$ is odd since $D$ is, and since $J^2 = \eps$ commutes with $D$ and with $M$, $J(D-M) = \epsp (D-M)J$ and $JM = -\epsp MJ$, as required.

Finally, suppose that $N$ is another symmetric bundle endomorphism on $\cE$ such that $(D-N)J = \epsp J(D-N)$ and $NJ = - \epsp JN$. Then $(D-M) - (D-N) = N- M$ both commutes and anticommutes with $J$, and thus must vanish.
\end{proof}

Finally, let us show that a Dirac-type real commutative spectral triple, in the appropriate abstract sense, necessarily arises from a real Clifford module together with compatible Dirac-type operator.

Consider a concrete real commutative spectral triple $(C^\infty(X),L^2(X,\cE),D,J)$. On the one hand, $C^\infty(X)$ is already commutative, while on the other, by Lem.~\ref{real}, $C^\infty(X)$ contains a canonical central unital \Star-subalgebra
\[
	\widetilde{C^\infty(X)}_J := \set{a \in C^\infty(X): Ja^\ast J^\ast = a};
\]
that $J$ is an anti-linear bundle endomorphism on $\cE$ is then precisely equivalent to the fact that $\widetilde{C^\infty(X)}_J = C^\infty(X)$. This, then, motivates the following:

\begin{definition}
Let $(A,H,D,J)$ be a real spectral triple. We call $(A,H,D,J)$ a \emph{real commutative spectral triple} if the following hold:
\begin{enumerate}
	\item $A = \tilde{A}_J$, or equivalently, $J a J^\ast = a^\ast$ for all $a \in A$;
	\item $(A,H,D)$ is a Dirac-type commutative spectral triple.
\end{enumerate}
\end{definition}

In particular, then, a concrete real commutative spectral triple is automatically a real commutative spectral triple in this abstract sense.

\begin{remark}
If one wants $A$ to correspond to $C^\infty(X,\bR)$ instead of $C^\infty(X) = C^\infty(X,\bC)$, then one should take $A$ to be a real (Fr\'echet) pre-\Cstar-algebra with trivial \Star-operation, in which case, condition (1) corresponds simply to commutativity of $J$ with $A$.
\end{remark}

The relevant refinement of the reconstruction theorem for commutative spectral triples is thus the claim that a real spectral triple is real commutative if and only if it is unitarily equivalent to a concrete real commutative spectral triple:

\begin{proposition}\label{rcrecon}
Let $(A,H,D,J)$ be a real commutative spectral triple of $KO$-dimension $n \bmod 8$ and metric dimension $p$. Then there exist a compact oriented Riemannian $p$-manifold $X$ and a self-adjoint Clifford module $\cE \to X$ such that $(A,H,D,J) \cong (C^\infty(X),L^2(X,\cE),D,J)$, where $D$, viewed as an operator on $L^2(X,\cE)$, is an essentially self-adjoint Dirac-type operator on $\cE$, and where $J$, viewed as an operator on $L^2(X,\cE)$, makes $(\cE,J)$ a real Clifford module of $KO$-dimension $n \bmod 8$ such that $D$ is $J$-compatible.
\end{proposition}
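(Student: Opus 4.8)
The plan is to bootstrap from the non-real Riemannian reconstruction theorem, Theorem~\ref{rierecon}, the extra work being to show that the real structure $J$ is necessarily \emph{local}, \emph{i.e.}\ induced by a bundle map. Since $(A,H,D,J)$ is a real commutative spectral triple, $(A,H,D)$ is by definition a Dirac-type $p$-dimensional commutative spectral triple, so Theorem~\ref{rierecon} furnishes a compact oriented Riemannian $p$-manifold $X$, a Hermitian vector bundle $\cE\to X$, and a unitary $U$ identifying $(A,H,D)$ with $(C^\infty(X),L^2(X,\cE),D)$, under which $\cHi=\bigcap_m\dom D^m$ becomes $C^\infty(X,\cE)$, the operator $D$ becomes an essentially self-adjoint Dirac-type operator, and $\cE$ acquires the Clifford action $c(df)=[D,f]$. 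When $n$ is even, the grading $\gamma$ transports to a self-adjoint unitary on $L^2(X,\cE)$ commuting with $C^\infty(X)$ and anticommuting with $D$; being $C^\infty(X)$-linear and preserving $\cHi$, it is a bundle endomorphism, hence a \ZZ-grading of $\cE$ with respect to which $D$ and the Clifford action are odd. We likewise transport $J$ along $U$ to an antiunitary on $L^2(X,\cE)$.

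The key step is to promote this transported $J$ to an antiunitary bundle automorphism of $\cE$. From $DJ=\epsp JD$ one gets, by induction, that $J$ preserves $\dom D^m$ for every $m$, hence preserves $\cHi=C^\infty(X,\cE)$. The defining relation $A=\tilde{A}_J$, \emph{i.e.}\ $JaJ^\ast=a^\ast$ for $a\in A$, reads $Jf=\bar f J$ for $f\in C^\infty(X)$, so $J|_{\cHi}$ is a $\bC$-antilinear, $C^\infty(X)$-semilinear bijection of $C^\infty(X,\cE)$; equivalently it is a $C^\infty(X)$-module isomorphism of $C^\infty(X,\cE)$ onto the section module of the conjugate bundle $\overline{\cE}$. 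Because finitely generated projective $C^\infty(X)$-modules are section modules and module maps between them are bundle maps, such a morphism is identified with a (necessarily invertible) antilinear bundle map $\cE\to\cE$; and since the inner product on $L^2(X,\cE)$ is the integral of the fibrewise Hermitian structure (the absolute-continuity structure of the commutative spectral triple), antiunitarity of $J$ forces this bundle map to be fibrewise antiunitary, with $J$ on $L^2(X,\cE)$ its closure.

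It remains to verify that $(\cE,J)$ is a real Clifford module of $KO$-dimension $n\bmod 8$ and that $D$ is $J$-compatible. The identities $J^2=\eps\Id$ and, for $n$ even, $J\gamma=\epspp\gamma J$ hold as operators and are $C^\infty(X)$-linear relations between bundle endomorphisms, hence hold fibrewise; the relation $DJ=\epsp JD$ is precisely $J$-compatibility. For the Clifford compatibility $Jc(\omega^\ast)J^\ast=c(\tau_\epsp(\omega))$, observe that both sides are $C^\infty(X)$-linear anti-homomorphisms $\bCl(X)\to\End(\cE)$, so it suffices to check on exact real one-forms, where the identity amounts to $J[D,f]J^\ast=\epsp[D,f]$ for $f\in C^\infty(X,\bR)$; this follows at once from $JDJ^\ast=\epsp D$ (rearranging $DJ=\epsp JD$) and $JfJ^\ast=f$, exactly as in the proof of the preceding proposition, and is consistent with the sign convention of Eq.~\ref{dirac}. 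Hence $(C^\infty(X),L^2(X,\cE),D,J)$ is a concrete real commutative spectral triple, unitarily equivalent via $U$ to $(A,H,D,J)$, and the Clifford module $\cE$ is self-adjoint, the isomorphism $\cE\cong\overline{\cE}\cong\cE^\vee$ being provided by $J$.

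I expect the main obstacle to be the locality argument of the second paragraph: extracting, from the purely operator-theoretic axioms satisfied by $J$, the fact that it is fibrewise defined. All the analytic substance needed there --- regularity, finiteness as a finitely generated projective $C^\infty(X)$-module, absolute continuity --- is already packaged into Theorem~\ref{rierecon} via the identification $\cHi\cong C^\infty(X,\cE)$; once $J$ is known to be a bundle map, the remaining verifications are routine bookkeeping with the $KO$-dimension sign table and with the conventions governing $\tau_\epsp$, the $\ast$-operation on $\bCl(X)$, and the Clifford action.
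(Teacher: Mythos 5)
Your proposal is correct and follows essentially the same route as the paper: apply Theorem~\ref{rierecon} to the underlying Dirac-type commutative spectral triple, use $JaJ^\ast = a^\ast$ to view $J$ as a unitary $C^\infty(X)$-module morphism $C^\infty(X,\cE) \to C^\infty(X,\overline{\cE})$ and hence as an antiunitary bundle automorphism, and then read off the real Clifford module conditions from the real spectral triple axioms, checking the Clifford compatibility on $c(df) = [D,f]$. Your explicit remarks that $DJ = \epsp JD$ forces $J$ to preserve $\cHi \cong C^\infty(X,\cE)$ and that antiunitarity plus absolute continuity yields fibrewise antiunitarity merely spell out steps the paper leaves implicit.
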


\begin{proof}
Suppose that $(A,H,D,J)$ is a real commutative spectral triple of $KO$-dimension $n \bmod 8$ and metric dimension $p$. In particular, $(A,H,D)$ is a Dirac-type commutative spectral triple of metric dimension $p$, so that by Thm.~\ref{rierecon}, there exist a compact oriented Riemannian $p$-manifold $X$ and a Hermitian vector bundle $\cE \to X$ such that $(A,H,D) \cong (C^\infty(X), L^2(X,\cE),D)$, where $D$, viewed as an operator on $L^2(X,\cE)$, defines an essentially self-adjoint Dirac-type operator on $\cE$. In particular, then, $D$ makes $\cE$ into a Clifford module, so that it suffices to prove that $J$, viewed as an operator on $L^2(X,\cE)$, is an anitunitary bundle automorphism on $\cE$ making $(\cE,J)$ a real Clifford module of $KO$-dimension $n \bmod 8$.

First, since $JaJ^\ast = a^\ast$ for all $a \in C^\infty(X)$, $J$ can be viewed as a unitary $C^\infty(X)$-linear morphism $C^\infty(X,\cE) \to \overline{C^\infty(X,\cE)} = C^\infty(X,\overline{\cE})$, where $\overline{C^\infty(X,\cE)}$ is the conjugate $C^\infty(X)$-module to $C^\infty(X,\cE)$, and $\overline{\cE}$ is the conjugate bundle to $\cE$. Hence, $J$ defines a unitary bundle isomorphism $\cE \cong \overline{\cE}$, that is, an antiunitary bundle automorphism on $\cE$. The rest then follows from the fact that $(C^\infty(X),L^2(X,\cE),D,J)$ is a real spectral triple of $KO$-dimension $n \bmod 8$; in particular, since $DJ = \epsp JD$, $Jc(df)J^\ast = J[D,f]J^\ast = \epsp[D,f] = \epsp c(df)$ for $f \in C^\infty(X,\bR)$, as required.
\end{proof}

\begin{remark}
One may ask which $KO$-dimensions are possible for real commutative spectral triples over a given compact oriented Riemannian manifold $X$. We shall soon see how to use the spectral triple of Ex.~\ref{cheapex} to construct real commutative spectral triples over $X$ of any $KO$-dimension.
\end{remark}

\subsection{Almost-commutative spectral triples}

At last, let us consider real structures on almost-commutative spectral triples, both concrete and abstract. To see what a real structure looks like on a concrete almost-commutative spectral triple, it suffices to consider the traditional Cartesian product construction. Let us therefore recall the construction of a product of real spectral triples as formulated by D\k{a}browski--Dossena, after Vanhecke:

\begin{theorem}[D\k{a}browski--Dossena~\cite{DaDo}*{\S 4}, cf.\ Vanhecke~\cite{Van99}]\label{dd}
For $i=1,2$, let $X_i = (A_i,H_i,D_i,J_i)$ be a real spectral triple of $KO$-dimension $n_i \bmod 8$. Then $X_1 \times X_2$ can made into a real spectral triple of $KO$-dimension $n_1 + n_2 \bmod 8$ with real structure $J$ defined as follows:
\begin{enumerate}
	\item If $n_1$ and $n_2$ are both even, then
	\[
		J_\pm := (J_1)_{\pm \epspp(n_1)} \otimes (J_2)_\pm;
	\]
	\item If $n_1$ is even and $n_2$ is odd, then
	\[
		J := (J_1)_{\epsp(n_1 + n_2)} \otimes J_2;
	\]
	\item If $n_1$ is odd and $n_2$ is even, then
	\[
		J := J_1 \otimes (J_2)_{\epsp(n_1+n_2)};
	\]
	\item If $n_1$ and $n_2$ are both odd, then
	\[
		J_\pm := J_1 \otimes J_2 \otimes M_\pm K,
	\]
	for $K$ the complex conjugation on $\bC^2$ and $(M_+,M_-)$ chosen as follows, with rows indexed by $n_1$ and columns indexed by $n_2$:
	\begin{equation}\label{oddodd}
		\begin{array}{ccccc}
			\toprule
			& 1 & 3 & 5 & 7\\
			\midrule
			1 & (i\sigma_2,\sigma_1) & (\sigma_3,\sigma_0) & (i\sigma_2,\sigma_1) & (\sigma_3,\sigma_0)\\
			3 & (\sigma_0,\sigma_3) & (\sigma_1,i\sigma_2) & (\sigma_0,\sigma_3) & (\sigma_1,i\sigma_2)\\
			5 & (i\sigma_2,\sigma_1) & (\sigma_3,\sigma_0) & (i\sigma_2,\sigma_1) & (\sigma_3,\sigma_0)\\
			7 & (\sigma_0,\sigma_3) & (\sigma_1,i\sigma_2) & (\sigma_0,\sigma_3) & (\sigma_1,i\sigma_2)\\
			\bottomrule
		\end{array}
	\end{equation}
\end{enumerate}
\end{theorem}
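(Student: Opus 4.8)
The statement is the product formula of D\k{a}browski--Dossena, so the plan is to verify directly that the proposed $J$ satisfies the three axioms of a real structure, using that $X_1 \times X_2$ is already a spectral triple---even, when $n_1 + n_2$ is even, with the \ZZ-grading $\gamma$ specified in Definition~\ref{products}. First I would dispose of antiunitarity: in every case $J$ is a tensor product of the antiunitaries $(J_i)_\pm \in \set{J_i, J_i\gamma_i}$, possibly composed with $M_\pm K$ on $\bC^2$, where $M_\pm \in \set{\sigma_0, \sigma_1, i\sigma_2, \sigma_3}$ is unitary and $K$ is complex conjugation, so the composite is antiunitary. The substance is then to check, in each of the four cases of Definition~\ref{products}, the relations $J^2 = \eps(n_1+n_2)$, $DJ = \epsp(n_1+n_2)JD$, and---when $n_1+n_2$ is even---$J\gamma = \epspp(n_1+n_2)\gamma J$.

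For $n_1, n_2$ both even, I would write $J_\pm = J_1' \otimes J_2'$ with $J_1' := (J_1)_{\pm\epspp(n_1)}$ and $J_2' := (J_2)_\pm$, expand $D J_\pm$ and $J_\pm D$ using $D = D_1 \otimes 1 + \gamma_1 \otimes D_2$ and the fact that operators on distinct tensor factors commute, and then insert the real-spectral-triple relations for $X_1$ with real structure $J_1'$ and for $X_2$ with real structure $J_2'$, whose $KO$-data are read off Table~\ref{newkotable}. The key point is that the summand $\gamma_1 \otimes D_2$ of $D$ contributes, upon moving $\gamma_1$ past $J_1'$, an extra factor $\epspp(n_1)$---crucially independent of the $\pm$-twist, since $\epspp(n_+) = \epspp(n_-) = \epspp(n)$ for $n$ even---and the subscript $\pm\epspp(n_1)$ on $J_1'$ is chosen precisely so that this factor is absorbed and a single overall scalar emerges from each of the two summands. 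Comparing coefficients then reduces the three relations to a short list of sign identities among the $\eps$-, $\epsp$-, $\epspp$-values of $X_1$ (with $J_1'$), $X_2$ (with $J_2'$), and the product, which is a finite check against Table~\ref{newkotable} using $8$-periodicity. The two mixed-parity cases run the same way with the ungraded product Dirac operators of Definition~\ref{products}: for $n_1$ even, $n_2$ odd the factor $\epspp(n_1)$ again arises from $\gamma_1$ in $D = D_1 \otimes 1 + \gamma_1 \otimes D_2$ and is balanced by twisting $J_1$ to $(J_1)_{\epsp(n_1+n_2)}$; for $n_1$ odd, $n_2$ even the grading in play is $\gamma_2$, occurring in $D = D_1 \otimes \gamma_2 + 1 \otimes D_2$, and one twists $J_2$ instead.

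For the remaining case, $n_1, n_2$ both odd, I would take $J_\pm = J_1 \otimes J_2 \otimes M_\pm K$ and compute $D J_\pm$ and $J_\pm D$ with $D = D_1 \otimes 1 \otimes \sigma_1 + 1 \otimes D_2 \otimes \sigma_2$ and $\gamma = 1 \otimes 1 \otimes \sigma_3$, using the relations $K\sigma_1 = \sigma_1 K$, $K\sigma_2 = -\sigma_2 K$, $K\sigma_3 = \sigma_3 K$ together with $D_i J_i = \epsp(n_i) J_i D_i$ and $J_i^2 = \eps(n_i)$. Matching coefficients reduces the three axioms to the purely two-by-two requirements that $M_\pm \overline{M_\pm} = \eps(n_1)\eps(n_2)\eps(n_1+n_2)\,\sigma_0$, that $M_\pm$ commutes or anticommutes with $\sigma_1$ according to the sign $\epsp(n_1)\epsp(n_1+n_2)$, with $\sigma_2$ according to $-\epsp(n_2)\epsp(n_1+n_2)$, and with $\sigma_3$ according to $\epspp(n_1+n_2)$; one then checks entry by entry of Table~\ref{oddodd}---using $2$-periodicity in $n_1$ and in $n_2$---that the tabulated $(M_+, M_-)$ are exactly the solutions, each determined up to an equivalent companion.

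The hard part is not any individual computation but the global coherence of the sign bookkeeping: getting the $(J_i)_\beta$-twists and the Pauli choices to come out right, which amounts to the assertion that Table~\ref{newkotable} is closed under addition of $KO$-dimensions modulo $8$. This is also precisely where the ``exotic'' $KO$-dimensions $n_-$ earn their keep---without allowing both members of $\set{J_i, J_i\gamma_i}$ the required identities fail---and the theorem is in effect the statement that with that flexibility the $KO$-dimension behaves additively, as Bott periodicity demands.
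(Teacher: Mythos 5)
Your plan is correct: the paper itself gives no proof of this statement (it is quoted from D\k{a}browski--Dossena, cf.\ Vanhecke), and your case-by-case verification of antiunitarity, $J^2=\eps$, $DJ=\epsp JD$, and $J\gamma=\epspp\gamma J$ against Tables~\ref{newkotable} and~\ref{oddodd}, with the twists $(J_i)_\beta$ chosen so that both summands of the product Dirac operator contribute the same sign, is exactly the standard argument in the cited source, and the sign conditions you derive (e.g.\ $M_\pm\overline{M_\pm}=\eps(n_1)\eps(n_2)\eps(n_1+n_2)\sigma_0$ and the (anti)commutation signs with $\sigma_1,\sigma_2,\sigma_3$) are the right ones. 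Only two cosmetic caveats: the entries of Table~\ref{oddodd} depend on $n_1,n_2$ modulo $4$ (not $2$), and the label $\pm$ on $J_\pm$ need not coincide with the decoration $(n_1+n_2)_\pm$ of the product --- the theorem asserts only $KO$-dimension $n_1+n_2 \bmod 8$, the two decorated versions being interchangeable via multiplication by $\gamma$.
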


Now, let $X:=(C^\infty(X),L^2(X,\cS),\slashed{D},C)$ be the canonical spectral triple of a compact spin $n_1$-manifold $X$ with fixed spin structure, and let $F  = (A_F,H_F,D_F,J_F)$ be a finite real spectral triple of $KO$-dimension $n_2 \bmod 8$. Ignoring the real structures, one has that $X \times F$ takes the form $(C^\infty(X,\cA),L^2(X,\cE),D)$, where, in particular, $\cA := X \times A_F$ is a bundle of algebras, $\cE$, formed from $\cS$ and $H_F$, is a Clifford $\cA$-module with Clifford action defined by $c(df) := [D,f]$ for $f \in C^\infty(X)$, and $\cA$-module structure induced by the representation of $A_F$ on $H_F$, and $D$, formed from $\slashed{D}$ and $D_F$, is a Dirac-type operator on the Clifford module $\cE$. Now, since $X$ and $F$ are real spectral triples, by the above theorem, $X \times F$ is a real spectral triple with real structure $J$. Taking this into account, we see that $\cE$ is a Clifford $\cA \otimes \cA^o$-module with $\cA \otimes \cA^o$-module structure defined by
\[
	(a \otimes b^o)\xi := aJb^\ast J^\ast\xi, \quad a, b \in C^\infty(X,\cA), \; \xi \in C^\infty(X,\cE),
\]
that $D$ therefore satisfies the addition constraint that
\[
	[[D,a],b^o] = 0, \quad a, b \in C^\infty(X,\cA),
\]
and hence that $J$ is an antiunitary bundle endomorphism on $\cE$ satisfying
\[
	Jc(\omega^\ast \otimes a^\ast \otimes (b^\ast)^o) J^\ast = c(\tau_{\epsp}(\omega) \otimes b \otimes a^o), \quad \omega \in C^\infty(X,\bCl(X)), \; a, b \in C^\infty(X,\cA).
\]
Thus, the additional structure provided by the real structure $J$ is encoded in the following definition:

\begin{definition}
Let $X$ be a compact oriented Riemannian manifold, let $\cA \to X$ be a bundle of algebras, and let $\cE \to X$ be a Clifford $\cA \otimes \cA^o$-module, which may or may not be \ZZ-graded. Let $J$ be an antiunitary bundle automorphism on $\cE$, and let $n \in \bZ_8$. We call $(\cE,J)$ a \emph{real Clifford $\cA$-bimodule of $KO$-dimension $n \bmod 8$} if $\cE$ is \ZZ-graded with \ZZ-grading $\gamma$ when $n$ is even, and $J$ satisfies the following:
\begin{enumerate}
	\item $J^2 = \eps \Id_\cE$,
	\item for all $\omega \in C^\infty(X,\bCl(X))$ and $a$, $b \in C^\infty(X,\cA)$,
	\[J(c(\omega^\ast \otimes a^\ast \otimes (b^\ast)^o)J^\ast = c(\tau_{\epsp}(\omega) \otimes b \otimes a^o),\]
	\item $J\gamma = \epspp \gamma J$ if $n$ is even,
\end{enumerate}
where $(\eps,\epsp,\epspp) \in \set{\pm 1}^3$ is determined by $n \bmod 8$ according Table~\ref{kotable}.
\end{definition}

\begin{remark}
Once more, just as before, if $n$ is even, then we can replace $J$ with $J\gamma$ to go reversibly between the ``conventional'' $KO$-dimension $n_+$ and the ``exotic $KO$-dimension'' $n_-$.
\end{remark}

\begin{remark}
Condition (2) in the above definition can be viewed as specifying the compatibility of $J$ with the Clifford $\cA$-bimodule structure on $\cE$, for $J$, \emph{a priori}, defines a $\bC$-linear anti-involution $T \mapsto J T^\ast J^\ast$ on $\End(\cE)$.
\end{remark}

The compatibility of the Dirac-type operator $D$ with the real Clifford $\cA$-bimodule $(\cE,J)$ is then encoded in the following definition:

\begin{definition}
Let $X$ be a compact oriented Riemannian manifold, let $\cA \to X$ be a bundle of algebras, and let $(\cE,J)$ be a real Clifford $\cA$-bimodule of $KO$-dimension $n \bmod 8$. Let $D$ be a Dirac-type operator on $\cE$. We shall call $D$ \emph{$(\cA,J)$-compatible} if it is $J$-compatible and
\[ [[D,a],b^o] = 0, \quad a, b \in C^\infty(X,\cA).\]
\end{definition}

Thus, if $(\cE,J)$ is a real Clifford $\cA$-bimodule of $KO$-dimension $n \bmod 8$ over $X$, and $D$ is a Dirac-type operator on $\cE$, then $(C^\infty(X,\cA),L^2(X,\cE),D,J)$ defines a real spectral triple of $KO$-dimension $n \bmod 8$ if and only if $D$ is $(\cA,J)$-compatible. Indeed, one can therefore give the following definition, generalising the example of the product of a spin manifold with a finite real spectral triple:

\begin{definition}
A \emph{concrete real almost-commutative spectral triple} is a real spectral triple of the form $(C^\infty(X,\cA),L^2(X,\cE),D,J)$, where $X$ is a compact oriented Riemannian manifold, $\cA \to X$ is a bundle of algebras, $(\cE,J)$ is a real Clifford $\cA$-bimodule, and $D$ is a $(\cA,J)$-compatible Dirac-type operator on $\cE$.
\end{definition}

In fact, in light of this more general definition, we can take any concrete real commutative spectral triple $V := (C^\infty(X),L^2(X,\cV),D_\cV,J_\cV)$ of $KO$-dimension $n_1 \bmod 8$ and any finite real spectral triple $F$ of $KO$-dimension $n_2 \bmod 8$ to form the concrete real almost-commutative spectral triple $V \times F$ of $KO$-dimension $n_1 + n_2 \bmod 8$. Applying this to Ex.~\ref{cheapex}, we immediately obtain the following

\begin{proposition}
Let $X$ be a compact oriented Riemannian manifold. Then for any $n \in \bZ_8$ there exists a concrete real almost-commutative spectral triple of $KO$-dimension $n \bmod 8$, namely, $(C^\infty(X),L^2(X,\wedge T^\ast_\bC X),d+d^\ast,(-1)^{\abs{\cdot}},K) \times F$ for any finite real spectral triple $F$ of $KO$-dimension $n \bmod 8$, and hence a concrete real commutative spectral triple of $KO$-dimension $n \bmod 8$.
\end{proposition}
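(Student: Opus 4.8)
The plan is to obtain everything by taking the product of the canonical real commutative spectral triple of Ex.~\ref{cheapex} with a suitable finite real spectral triple, so that the only point requiring a genuine argument is the existence of finite real spectral triples of every $KO$-dimension.

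First I would recall from Ex.~\ref{cheapex} that $V_0 := (C^\infty(X),L^2(X,\wedge T^\ast_\bC X),d+d^\ast,(-1)^{\abs{\cdot}},K)$ is a concrete real commutative spectral triple of $KO$-dimension $0 \bmod 8$, whose underlying bundle of algebras is the trivial rank-one bundle $X \times \bC$. Then, for each $n \in \bZ_8$, I would exhibit a finite real spectral triple of $KO$-dimension $n \bmod 8$ whose algebra is $\bC$: take $F_1 := (\bC,\bC,0,\kappa)$ with $\kappa$ the complex conjugation on $\bC$, which is a finite (odd) real spectral triple of $KO$-dimension $1 \bmod 8$ since $\kappa^2 = \Id$ and $D_{F_1}=0$ (so the constraint $DJ = \epsp JD$ holds trivially); then, since the underlying products of finite spectral triples are finite and products of real spectral triples add $KO$-dimensions (Thm.~\ref{dd}), any $n$-fold product $F_n := F_1 \times \cdots \times F_1$ (with $F_0$ interpreted as the trivial even finite real spectral triple $(\bC,\bC,0,\kappa,1)$) is a finite real spectral triple of $KO$-dimension $n \bmod 8$, with algebra $\bC \otimes \cdots \otimes \bC = \bC$. (Alternatively one may simply cite the standard fact that finite real spectral triples realising every $KO$-dimension exist; the point of the construction above is only to keep the algebra as small as possible.)

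Next I would invoke the observation made immediately before the statement: for $V$ any concrete real commutative spectral triple of $KO$-dimension $n_1 \bmod 8$ and $F$ any finite real spectral triple of $KO$-dimension $n_2 \bmod 8$, the product $V \times F$ is a concrete real almost-commutative spectral triple of $KO$-dimension $n_1 + n_2 \bmod 8$. Applying this with $V = V_0$ and $F = F_n$ yields a concrete real almost-commutative spectral triple of $KO$-dimension $0 + n = n \bmod 8$, which is the first assertion. For the second, note that because the algebra of $F_n$ is $\bC$, the algebra of $V_0 \times F_n$ is $C^\infty(X) \otimes \bC = C^\infty(X)$ and its bundle of algebras is again the trivial bundle $X \times \bC$; over a trivial bundle of algebras a real Clifford $\cA$-bimodule is precisely a real Clifford module and $(\cA,J)$-compatibility is precisely $J$-compatibility, while the real structure $J$ of $V_0 \times F_n$, being built from $K$ and $J_{F_n}$ as in Thm.~\ref{dd}, is an antiunitary \emph{bundle} automorphism of the relevant Clifford module. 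Hence $V_0 \times F_n$ is literally of the form $(C^\infty(X),L^2(X,\cE),D,J)$ with $(\cE,J)$ a real Clifford module and $D$ a $J$-compatible Dirac-type operator, i.e.\ a concrete real commutative spectral triple of $KO$-dimension $n \bmod 8$.

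The proof is therefore essentially bookkeeping once Ex.~\ref{cheapex} and Thm.~\ref{dd} are in hand; the only step with content is the construction of the finite real spectral triples of each $KO$-dimension, and the only verification that is not entirely formal is that the product real structure descends to a fibrewise operator on the Clifford module when the finite factor has algebra $\bC$ — a one-line check. The mildest subtlety I anticipate is keeping the $KO$-dimension arithmetic consistent through the iterated products, in particular through the odd-by-odd case of Thm.~\ref{dd}, and making sure the trivial gradings used on the even $F_n$ interact correctly with the product construction.
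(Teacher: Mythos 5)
Your proposal is correct and follows essentially the same route as the paper: the proposition is obtained there as an immediate consequence of the observation that the product of any concrete real commutative spectral triple with any finite real spectral triple is a concrete real almost-commutative spectral triple of the summed $KO$-dimension, applied to the Hodge--de Rham triple of Ex.~\ref{cheapex}. Your only additions --- the explicit iterated-product construction of finite real spectral triples of each $KO$-dimension with algebra $\bC$, and the remark that choosing $A_F = \bC$ makes the product literally a concrete real commutative spectral triple --- are correct fillings-in of details the paper leaves implicit.
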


Given a real Clifford $\cA$-bimodule $(\cE,J)$, one can again ask, just as in the commutative case, if a Dirac-type operator on $\cE$ compatible with its Clifford action and the bimodule structure is necessarily $(\cA,J)$-compatible. Once more, the answer is yes, up to perturbation by a symmetric bundle endomorphism:

\begin{proposition}
Let $X$ be a compact oriented Riemannian manifold, let $(\cE,J)$ be a real Clifford module on $X$ of $KO$-dimension $n \bmod 8$, and let $D$ be a Dirac-type operator on $\cE$ satisfying
\[
	[[D,a],b^o] = 0, \quad a, b \in C^\infty(X,\cA).
\]
Then there exists a unique symmetric bundle endomorphism $M$ on $\cE$ such that $D-M$ is an $(\cA,J)$-compatible Dirac-type operator, and  $MJ = - \epsp JM$.
\end{proposition}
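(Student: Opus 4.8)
The plan is to mirror the proof of the earlier commutative-case proposition, splitting the required endomorphism $M$ into two pieces that handle the two separate defects of $D$: its possible failure to be $J$-compatible, and (what is trivially already assumed here, so needs no correction) the first-order condition $[[D,a],b^o]=0$. Since the hypothesis already supplies $[[D,a],b^o]=0$ for all $a,b\in C^\infty(X,\cA)$, the only thing that can go wrong is $J$-compatibility, so in fact the construction reduces essentially to the commutative-case argument, and the main content is checking that the correcting endomorphism $M$ does not destroy the first-order condition.

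First I would set $M := \tfrac{1}{2}(D - \epsp J D J^\ast)$, exactly as in the commutative case, and observe that for any $f \in C^\infty(X,\bR)$ one has $J[D,f]J^\ast = Jc(df)J^\ast = \epsp c(df) = \epsp[D,f]$, whence $[D - \epsp JDJ^\ast, f]=0$; since $C^\infty(X,\bR)$ generates $C^\infty(X)$ as a complex vector space and $M$ is symmetric (as $D$ is symmetric and $J$ antiunitary), $M$ is $C^\infty(X)$-linear and hence a symmetric bundle endomorphism of $\cE$. Then $D - M = \tfrac{1}{2}(D + \epsp JDJ^\ast)$ is again a symmetric Dirac-type operator (it still satisfies $[D-M,f] = c(df)$), it is odd when $n$ is even since $D$ is, and using $J^2 = \eps \Id_\cE$ commuting with $D$ one computes $J(D-M) = \epsp(D-M)J$ and $JM = -\epsp MJ$.

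Next I would verify that $D - M$ still satisfies the bimodule first-order condition $[[D-M,a],b^o]=0$. Since $M$ is a bundle endomorphism, $[M,a]$ is again a bundle endomorphism, so $[M,a]$ commutes with the right action $b^o$ precisely when $M$ commutes appropriately with $C^\infty(X,\cA^o)$; here one uses that $M$ itself lies in the commutant structure controlled by the bimodule, which follows because $M = \tfrac12(D-\epsp JDJ^\ast)$ and both $D$ and $JDJ^\ast$ respect the relevant commutation relations — indeed $J\cdot J^\ast$ swaps the $\cA$ and $\cA^o$ actions by the defining property (2) of a real Clifford $\cA$-bimodule, so $[[JDJ^\ast,a],b^o] = J[[D,b],a^o]J^\ast = 0$. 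Hence $[[M,a],b^o]=0$, and therefore $[[D-M,a],b^o]=[[D,a],b^o]-[[M,a],b^o]=0$.

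Finally, uniqueness: if $N$ is another symmetric bundle endomorphism with $(D-N)J = \epsp J(D-N)$ and $NJ = -\epsp JN$, then $N - M = (D-M)-(D-N)$ both commutes and anticommutes with $J$ — commutes because $D-M$ and $D-N$ are both $J$-compatible, anticommutes because $M$ and $N$ both anticommute with $J$ — and since $J$ is invertible this forces $N - M = 0$. The main obstacle, such as it is, is the bimodule-compatibility check in the third paragraph: one must be careful that conjugation by $J$ genuinely interchanges the $\cA$- and $\cA^o$-actions so that $JDJ^\ast$ inherits the first-order condition with the roles of $a$ and $b$ swapped; once that is in hand, everything else is a routine repetition of the commutative argument.
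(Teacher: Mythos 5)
Your proof is correct and takes essentially the same route the paper intends: the paper proves the commutative analogue by setting $M=\tfrac12(D-\epsp JDJ^\ast)$ and tacitly leaves this bimodule version to the identical argument, and your added check that conjugation by $J$ exchanges the $\cA$- and $\cA^o$-actions (so that $JDJ^\ast$, hence $M$, still satisfies the order-one condition) is exactly the routine supplement required. One small nit: the swapped identity should read $[[JDJ^\ast,a],b^o]=J[[D,b^\ast],(a^\ast)^o]J^\ast$ (adjoints appear because $J^\ast aJ=(a^\ast)^o$ and $J^\ast b^oJ=b^\ast$), but since the hypothesis holds for all $a,b\in C^\infty(X,\cA)$ this does not affect the conclusion.
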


Now, let us derive the corresponding abstract definition of real almost-com\-mu\-ta\-tive spectral triple, so that we can get the appropriate refinement of the reconstruction theorem for almost-commutative spectral triples.

Let $(A,H,D,J) := (C^\infty(X,\cA),L^2(X,\cE),D,J)$ be a concrete real almost-commu\-tative spectral triple. On the one hand, $(A,H,D)$ is, in particular, an abstract almost-commutative spectral triple with base (viz, distinguished central unital \Star-subalgebra of $A$) $B := C^\infty(X)$. On the other hand, by Lem.~\ref{real}, $A$ contains a canonical central unital \Star-subalgebra $\tilde{A}_J$, which moreover contains $B$ precisely because $J$ is, in particular, an antilinear bundle endomorphism of $\cE$. It has already been observed in specific examples (e.g., the noncommutative-geometric Standard Model~\cite{CCM}*{Lemma 3.2})that $B$ and $\tilde{A}_J$ are, in fact, equal---as it turns out, this is a completely general phenomenon.

\begin{proposition}\label{realbase}
Let $(C^\infty(X,\cA),L^2(X,\cE),D,J)$ be a concrete real almost-com\-mutative spectral triple. Then $\widetilde{C^\infty(X,\cA)}_J = C^\infty(X)1_\cA.$
\end{proposition}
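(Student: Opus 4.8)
The plan is to prove the two inclusions separately. The inclusion $C^\infty(X)1_\cA \subseteq \widetilde{C^\infty(X,\cA)}_J$ is essentially automatic: for $f \in C^\infty(X)$ the operator $f1_\cA$ on $L^2(X,\cE)$ is multiplication by $f$, and since, by hypothesis, $J$ is an antiunitary bundle automorphism of $\cE$ covering the identity of $X$---and hence antilinear over $C^\infty(X)$---one has $J(f1_\cA)^\ast J^\ast = J(\bar f 1_\cA)J^\ast = f1_\cA$.

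For the reverse inclusion, the first step is to reinterpret the defining condition $Ja^\ast J^\ast = a$. Specialising condition~(2) of the definition of real Clifford $\cA$-bimodule to $\omega = 1$ and $b = 1$ (using that any anti-automorphism fixes the unit, so $\tau_\epsp(1) = 1$) shows that $Ja^\ast J^\ast$ coincides with the right action of $a$, i.e.\ with the action of $1 \otimes a^o$ on $\cE$; hence $a \in \widetilde{C^\infty(X,\cA)}_J$ if and only if the left and right actions of $a$ agree as operators on $\cE$. Since $\cE$ is a Clifford $\cA \otimes \cA^o$-module, the structure map $\cA \otimes \cA^o \to \End(\cE)$ is fibrewise injective, so this is equivalent to $a(x) \otimes 1 = 1 \otimes a(x)^o$ in $\cA_x \otimes \cA_x^o$ for every $x \in X$.

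The heart of the argument is then a pointwise algebraic fact: for a finite-dimensional $\Cstar$-algebra $\cB$, any $b$ with $b \otimes 1 = 1 \otimes b^o$ in $\cB \otimes \cB^o$ is a scalar multiple of $1_\cB$. I would deduce it using Lem.~\ref{real}, which already gives that each $a(x)$ is central in $\cA_x$: writing $Z(\cA_x) \cong \bC^{k}$ with minimal central projections $p_1,\dots,p_k$ and $a(x) = \sum_i \lambda_i p_i$, the identity $a(x)\otimes 1 = 1 \otimes a(x)^o$ lives in $Z(\cA_x)\otimes Z(\cA_x)^o$ and reads $\sum_{i,j}\lambda_i\, p_i \otimes p_j^o = \sum_{i,j}\lambda_j\, p_i \otimes p_j^o$, forcing $\lambda_i = \lambda_j$ for all $i,j$ and hence $a(x) = \lambda(x)1_{\cA_x}$ with $\lambda(x) \in \bC$. (One may also argue directly, without invoking Lem.~\ref{real}, by decomposing $\cB$ into matrix blocks $M_{n_i}(\bC)$ and evaluating $b\otimes 1$ and $1\otimes b^o$ on simple tensors in $\bC^{n_i}\otimes\bC^{n_j}$.) Finally, $\lambda$ is smooth because composing the smooth section $a$ with a fibrewise normalised trace of $\cA$ (a smooth bundle map $\cA \to X\times\bC$ sending $1_{\cA_x}$ to $1$) returns it, so $a = \lambda 1_\cA \in C^\infty(X)1_\cA$.

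I expect the only real subtlety to be bookkeeping rather than depth: correctly extracting ``$Ja^\ast J^\ast = $ right action of $a$'' from condition~(2) of the real Clifford $\cA$-bimodule axioms, and---if one foregoes Lem.~\ref{real}---ensuring that the off-diagonal summands $M_{n_i}(\bC) \otimes M_{n_j}(\bC)^o$ of $\cA_x \otimes \cA_x^o$ are actually used, so that a \emph{single} scalar is pinned down across all simple summands of $\cA_x$ and the whole of $a(x)$---not merely each central component separately---is a multiple of $1_{\cA_x}$.
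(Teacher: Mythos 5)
Your proof is correct and takes essentially the same route as the paper: the paper likewise reduces pointwise to the algebraic fact that, in a finite-dimensional \Cstar-algebra, $a \otimes 1 = 1 \otimes a$ forces $a \in \bC 1$, proved by passing to the centre (via the Wedderburn decomposition) and comparing coefficients on the minimal central projections. The only cosmetic differences are that the paper derives centrality inside that lemma rather than quoting Lem.~\ref{real}, also records the real-coefficient case, and leaves implicit the bookkeeping you spell out (identifying $Ja^\ast J^\ast$ with the right action via condition~(2), fibrewise injectivity of the $\cA \otimes \cA^o$-representation, and smoothness of the resulting scalar function).
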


\begin{remark}
If $\cA$ is a real bundle of algebras, then one should replace $C^\infty(X)$ with $C^\infty(X,\bR)$ in the above statement.
\end{remark}

This result is an immediate corollary of the following algebraic observation, applied pointwise:

\begin{lemma}
Let $A_F$ be a finite \Cstar-algebra over $\bK = \bR$ or $\bC$. Then
\[
	\set{a \in A_F \mid a \otimes 1 = 1 \otimes a \in A_F \otimes_\bK A_F } = \bK 1_{A_F}.
\]
\end{lemma}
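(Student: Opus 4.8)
The plan is to prove the non-trivial inclusion by passing to a faithful representation of $A_F$, where the tensor-product condition collapses to an elementary statement in linear algebra. I would deliberately \emph{avoid} the superficially attractive route of identifying the displayed set with the centre $Z(A_F)$: over $\bK = \bR$ one has, for instance, $Z(M_n(\bC)) = \bC\,\Id \supsetneq \bR\,\Id$, so that shortcut would give the wrong answer. The inclusion $\bK 1_{A_F} \subseteq \set{a \in A_F \mid a \otimes 1 = 1 \otimes a}$ is immediate, since $(\lambda 1_{A_F}) \otimes 1_{A_F} = \lambda(1_{A_F}\otimes 1_{A_F}) = 1_{A_F} \otimes (\lambda 1_{A_F})$ for all $\lambda \in \bK$.

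For the reverse inclusion, the first step is to record the following elementary fact, to be applied with $V = A_F$: if $V$ is a finite-dimensional $\bK$-vector space and $b \in \End_\bK(V)$ satisfies $b \otimes \Id_V = \Id_V \otimes b$ in $\End_\bK(V) \otimes_\bK \End_\bK(V) = \End_\bK(V \otimes_\bK V)$, then $b \in \bK\,\Id_V$. To see this one evaluates the identity on an elementary tensor $v \otimes w$ to get $(bv)\otimes w = v \otimes (bw)$ for all $v$, $w \in V$; fixing $v \neq 0$, each $(bv)\otimes w$ then lies in $v \otimes V$, which forces $bv$ to be a scalar multiple of $v$ (a one-line dimension argument comparing $v \otimes V$ and $(bv)\otimes V$, the case $bv = 0$ being trivial), and reinserting the resulting scalar shows $b$ is that scalar times $\Id_V$. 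Crucially, the scalar lands in $\bK$ automatically, since everything in sight is $\bK$-linear.

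The second step is to transport the hypothesis through the left regular representation $L \colon A_F \to \End_\bK(A_F)$, $L(a)x := ax$, which is injective because $A_F$ is unital. Since $A_F$ is finite-dimensional, $L \otimes L$ is an injective $\bK$-linear map $A_F \otimes_\bK A_F \to \End_\bK(A_F) \otimes_\bK \End_\bK(A_F) = \End_\bK(A_F \otimes_\bK A_F)$ (a tensor product of injections of vector spaces over a field is injective), carrying $a \otimes 1$ to $L(a) \otimes \Id$ and $1 \otimes a$ to $\Id \otimes L(a)$. Hence $a \otimes 1 = 1 \otimes a$ forces $L(a) = \lambda\,\Id = L(\lambda 1_{A_F})$ for some $\lambda \in \bK$ by the fact above, and injectivity of $L$ yields $a = \lambda 1_{A_F} \in \bK 1_{A_F}$. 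There is no serious obstacle here; the only thing to be careful about is precisely the real case — namely that the argument must stay inside the category of $\bK$-vector spaces, so that the scalar is a genuine element of $\bK$ and one never has to untangle $A_F \otimes_\bR A_F$ (in which, e.g., $\bC \otimes_\bR \bC \cong \bC \oplus \bC$). This is exactly why the representation-theoretic route is preferable to the centre-theoretic one, and the same proof applies verbatim whether $\bK = \bR$ or $\bK = \bC$.
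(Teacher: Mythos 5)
Your proof is correct, but it is genuinely different from the one in the paper. The paper argues via structure theory: it invokes Wedderburn to write $A_F = \oplus_k M_k(\bK_k)$, first observes that any $a$ with $a \otimes 1 = 1 \otimes a$ must lie in $Z(A_F) \cong \oplus_k \bK'_k$, and then analyses the condition $\lambda_k \otimes 1 = 1 \otimes \lambda_l$ componentwise, treating $\bK = \bC$ and $\bK = \bR$ separately --- in particular the real case with complex summands is handled there by noting that $\lambda \otimes 1 = 1 \otimes \lambda$ in $\bC \otimes_\bR \bC$ already forces $\lambda \in \bR$. So your criticism of the ``centre-theoretic'' route as giving the wrong answer is aimed at a strawman: the paper does pass through the centre, but it does not stop there, and its argument is sound; what your objection correctly identifies is only that the centre alone is too big over $\bR$. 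Your route instead uses no structure theory and indeed no \Cstar-structure at all: the left regular representation plus the elementary fact that $b \otimes \Id = \Id \otimes b$ in $\End_\bK(V \otimes_\bK V)$ forces $b \in \bK \Id$ proves the statement for any finite-dimensional unital $\bK$-algebra, which is both more elementary and more general. Two small remarks: injectivity of $L \otimes L$ is never needed (you only apply $L \otimes L$ to both sides of an equality, and at the end injectivity of $L$, or simply evaluation at $1_{A_F}$, suffices); and the detour through $\End_\bK(A_F)$ can be skipped entirely, since expanding $a$ in a $\bK$-basis of $A_F$ containing $1_{A_F}$ and comparing coefficients of the induced basis of $A_F \otimes_\bK A_F$ gives $a \in \bK 1_{A_F}$ in one line. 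Neither point affects correctness.
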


\begin{proof}
By Wedderburn's theorem for finite-dimensional \Cstar-algebras, write
\[
	A_F = \oplus_{k=1}^N M_k (\bK_k)
\]
where $\bK_k \in \set{\bR,\bC,\bH}$ if $\bK = \bR$, and $\bK_k = \bC$ if $\bK = \bC$. By construction, then,
\[
	\set{a \in A_F \mid a \otimes 1 = 1 \otimes a \in A_F \otimes_\bK A_F} \subset Z(A_F) \cong \bigoplus_{k=1}^N \bK^\prime_k,
\]
where $\bK^\prime_k := \bC$ if $\bK_k = \bC$, and $\bK^\prime_k := \bR$ otherwise, so that
\[
	Z(A_F) \otimes_\bK Z(A_F) = \bigoplus_{k,l=1}^N \bK^\prime_k \otimes_\bK \bK^\prime_l.
\]
Now, let $a \in Z(A_F)$, which we identify with $(\lambda_k)_{k=1}^N \in \oplus_{k=1}^N \bK^\prime_k$. Then
\[
	a \otimes 1 - 1 \otimes a = (\lambda_k \otimes 1 - 1 \otimes \lambda_l)_{k,l = 1}^N,
\]
so that $a \otimes 1 = 1 \otimes a$ if and only if $\lambda_k \otimes 1 = 1 \otimes \lambda_k$ for all $1 \leq k,l \leq N$. We have two cases. First, suppose that $\bK = \bC$. It therefore follows that $a \otimes 1 = 1 \otimes a$ if and only if $\lambda_k = \lambda_l$ for all $k$, $l$, if and only if $a \in \bC 1_{A_F}$. Now, suppose that $\bK = \bR$. Then, similarly, $a \otimes 1 = 1 \otimes a$ if and only if $\lambda_k = \lambda_l \in \bR$ for all $k$, $l$, if and only if $a \in \bR 1_{A_F}$.
\end{proof}

Our observations motivate the following definition:

\begin{definition}
Let $(A,H,D,J)$ be a real spectral triple. We call $(A,H,D,J)$ a \emph{real almost-commutative spectral triple} if $(A,H,D)$ is an almost-commutative spectral triple with base $\tilde{A}_J$.
\end{definition}

We have just seen that every concrete real almost-commutative spectral triple is a real almost-commutative spectral triple; the reconstruction theorem for almost-commutative spectral triples readily implies the converse.

\begin{theorem}
Let $(A,H,D,J)$ be a real almost-commutative spectral triple of $KO$-dimension $n \bmod 8$ and metric dimension $p$. Then there exist a compact oriented Riemannian $p$-manifold $X$, a bundle of algebras $\cA \to X$, and a Clifford $\cA$-bimodule $\cE \to X$ such that $\tilde{A}_J \cong C^\infty(X)$ and
\[
	(A,H,D,J) \cong (C^\infty(X,\cA), L^2(X,\cE),D,J),
\]
where, viewing $D$ and $J$ as operators on $\cE$, $(\cE,J)$ is a real Clifford $\cA$-bimodule of $KO$-dimension $n \bmod 8$, and $D$ is a $(\cA,J)$-compatible essentially self-adjoint Dirac-type operator on $\cE$.
\end{theorem}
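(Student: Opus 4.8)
The plan is to reduce to the reconstruction theorem for (non-real) almost-commutative spectral triples, Thm.~\ref{acrecon}, and then check that the real structure $J$ transports, along the resulting unitary equivalence, to an antiunitary bundle automorphism of $\cE$ whose algebraic properties are exactly the defining requirements of a real Clifford $\cA$-bimodule and of an $(\cA,J)$-compatible Dirac-type operator. Concretely: since $(A,H,D,J)$ is a real almost-commutative spectral triple, $(A,H,D)$ is by definition a $p$-dimensional almost-commutative spectral triple with base $B := \tilde{A}_J$, which by Lem.~\ref{real} is a central unital \Star-subalgebra of $A$. Applying Thm.~\ref{acrecon} yields a compact oriented Riemannian $p$-manifold $X$, a bundle of algebras $\cA \to X$, a Clifford $\cA$-module $\cE \to X$, and a unitary $U : H \to L^2(X,\cE)$ implementing $(A,H,D) \cong (C^\infty(X,\cA),L^2(X,\cE),D)$, under which $\tilde{A}_J \cong C^\infty(X)$ acts by scalar multiplication, $\cHi = \cap_k \dom D^k$ becomes $C^\infty(X,\cE)$, and $D$ becomes an essentially self-adjoint Dirac-type operator on $\cE$. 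Replacing $J$, and $\gamma$ when $n$ is even, by $UJU^\ast$ and $U\gamma U^\ast$, we may assume $H = L^2(X,\cE)$, with $J^2 = \eps$, $DJ = \epsp JD$, and (for $n$ even) $J\gamma = \epspp\gamma J$. Since $J$ and $\gamma$ commute with $D^2$, they commute with $\abs{D}$, hence preserve each $\dom D^k$ and thus $\cHi = C^\infty(X,\cE)$; moreover $\gamma$, being a self-adjoint unitary commuting with $C^\infty(X) \subseteq A$ and anticommuting with $D$, is the multiplication operator of a bundle $\ZZ$-grading of $\cE$, with respect to which $D$ is odd.

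Next, exactly as in the proof of Prop.~\ref{rcrecon}, the identity $JfJ^\ast = f^\ast = \bar{f}$ for $f \in \tilde{A}_J \cong C^\infty(X)$ shows that $J|_{\cHi}$ is $C^\infty(X)$-antilinear, hence induced by an antilinear bundle endomorphism of $\cE$, which, being fibrewise isometric, is an antiunitary bundle automorphism of $\cE$. Consequently conjugation $T \mapsto JT^\ast J^\ast$ maps $C^\infty(X,\End(\cE)) = \End_{C^\infty(X)}(\cHi)$ into itself, so $b \mapsto Jb^\ast J^\ast$ is an injective, $C^\infty(X)$-linear \Star-antihomomorphism $A = C^\infty(X,\cA) \to C^\infty(X,\End(\cE))$, i.e.\ an injective representation of the opposite bundle of algebras $\cA^o$ on $\cE$. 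The order-zero condition $[a, Jb^\ast J^\ast] = 0$ of the real spectral triple then says that the $\cA$- and $\cA^o$-actions commute, so that $\cE$ becomes a Clifford $\cA\otimes\cA^o$-module with $(a \otimes b^o)\xi := aJb^\ast J^\ast\xi$; moreover, $JfJ^\ast = \bar{f}$ and $JDJ^\ast = \epsp D$ give $Jc(df)J^\ast = \epsp c(df) \in C^\infty(X,\bCl(X))$ for $f \in C^\infty(X,\bR)$, whence $\{Jb^\ast J^\ast\}$, like $\cA$, commutes with all of $\bCl(X)$, so the $\cA\otimes\cA^o$-action indeed commutes with the Clifford action.

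It remains to identify the $KO$-dimension data and the compatibility of $D$. Since $J$ is now a bundle automorphism, $J^2 = \eps\Id_\cE$ and $J\gamma = \epspp\gamma J$ hold fibrewise; and from $JDJ^\ast = \epsp D$, $JfJ^\ast = \bar{f}$, the \Star-structure of $\bCl(X)$, and the Clifford relations one obtains $Jc(\omega^\ast)J^\ast = c(\tau_{\epsp}(\omega))$ for all $\omega \in C^\infty(X,\bCl(X))$, whence, for $a,b \in C^\infty(X,\cA)$,
\[
	J\,c\bigl(\omega^\ast \otimes a^\ast \otimes (b^\ast)^o\bigr)\,J^\ast
	= \bigl(Jc(\omega^\ast)J^\ast\bigr)\bigl(Ja^\ast J^\ast\bigr)\bigl(J(JbJ^\ast)J^\ast\bigr)
	= c(\tau_{\epsp}(\omega))\,(Ja^\ast J^\ast)\,b
	= c\bigl(\tau_{\epsp}(\omega) \otimes b \otimes a^o\bigr),
\]
using $J^2 = \eps$ to evaluate $J(JbJ^\ast)J^\ast = b$ together with the commutativity of the three factors. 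Thus $(\cE,J)$ is a real Clifford $\cA$-bimodule of $KO$-dimension $n \bmod 8$. Finally, $DJ = \epsp JD$ is exactly $J$-compatibility of $D$, and the order-one condition $[[D,a],Jb^\ast J^\ast] = 0$ reads $[[D,a],b^o] = 0$ for $a,b \in C^\infty(X,\cA)$, so $D$ is $(\cA,J)$-compatible. Hence $(C^\infty(X,\cA),L^2(X,\cE),D,J)$ is a concrete real almost-commutative spectral triple unitarily equivalent, via $U$, to $(A,H,D,J)$, with $\tilde{A}_J \cong C^\infty(X)$, which proves the theorem.

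The only genuine obstacle is the assertion that $J$ descends from a Hilbert-space operator to a bona fide bundle automorphism of $\cE$; this is handled verbatim by the argument of Prop.~\ref{rcrecon} once $\tilde{A}_J \cong C^\infty(X)$ is known. Everything else amounts to unwinding, in reverse, the concrete-to-abstract passage already carried out for real almost-commutative spectral triples, keeping careful track of the sign conventions in the $\ast$-structure of $\bCl(X)$.
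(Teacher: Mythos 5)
Your proposal is correct and follows essentially the same route as the paper: apply Thm.~\ref{acrecon} to $(A,H,D)$ with base $\tilde{A}_J$, invoke (the argument of) Prop.~\ref{rcrecon} to realise $J$ as an antiunitary bundle automorphism making $(\cE,J)$ a real Clifford module, and then use the order-zero and order-one conditions of the real spectral triple to obtain the $\cA\otimes\cA^o$-module structure $(a\otimes b^o)\xi := aJb^\ast J^\ast\xi$ and the $(\cA,J)$-compatibility of $D$. You simply spell out in more detail the verifications the paper leaves implicit.
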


\begin{remark}
If $A$ is taken to be a real (Fr\'echet) pre-\Cstar-algebra, and not complex, then one finds instead that $\tilde{A}_J \cong C^\infty(X,\bR)$.
\end{remark}

\begin{proof}
First, by Thm.~\ref{acrecon}, there exist a compact oriented Riemannian $p$-manifold $X$, a bundle of algebras $\cA \to X$, and a Clifford $\cA$-module $\cE \to X$ such that $\tilde{A}_J \cong C^\infty(X)$ and
$(A,H,D) \cong (C^\infty(X,\cA),L^2(X,\cE),D)$, where $D$, viewed as an operator on $\cE$, is an essentially self-adjoint Dirac-type operator. Viewing $J$ as an operator on $L^2(X,\cE)$, in light of Prop.~\ref{rcrecon}, we therefore have that $(C^\infty(X),L^2(X,\cE),D,J)$ is a real Dirac-type commutative spectral triple of $KO$-dimension $n \bmod 8$, and hence that $(\cE,J)$ is a real Clifford module of $KO$-dimension $n \bmod 8$. Finally, the conditions for a real spectral triple imply that $(\cE,J)$ is a real Clifford $\cA$-bimodule, with $\cA \otimes \cA^o$-module structure given by
\[
	(a \otimes b^o)\xi := aJb^\ast J^\ast \xi, \quad a,b \in C^\infty(X,\cA), \; \xi \in C^\infty(X,\cE),
\]
and that $D$ is $(\cA,J)$-compatible, as required.
\end{proof}

\section{Twistings}

We have already seen how to generalise the conventional definition of real almost-commutative spectral triple into a form suited to a reconstruction theorem. For physical applications, however, it is useful to consider a more conservative generalisation, where we take the product of a concrete real commutative spectral triple not with a single finite real spectral triple, but with a family of such spectral triples equipped with suitable connection:

\begin{definition}
Let $X$ be a compact oriented Riemannian manifold. A \emph{real family of $KO$-dimension $n \bmod 8$} over $X$ is a quintuple of the form $(\cA,\cF,\nabla^\cF,D_\cF,J_\cF)$, where:
\begin{enumerate}
	\item $\cA \to X$ is a bundle of algebras;
	\item $\cF \to X$ is an $\cA \otimes \cA^o$-module endowed, if $n$ is even, with a $\ZZ$-grading $\gamma_\cF$  commuting with all sections of $\cA \otimes \cA^o$;
	\item $\nabla^\cF$ is a self-adjoint connection on $\cF$, odd if $n$ is even, such that the induced connection on $\End(\cF)$ restricts to connections on $\cA$ and on $\cA^o$;
	\item $D_\cF$ is a symmetric bundle endomorphism on $\cF$, odd if $n$ is even, satisfying
	\[
		[[D_\cF,a],b^o] = 0, \quad a,b \in C^\infty(X,\cA),
	\]
	and which anticommutes with $\gamma_\cF$ if $n$ is even;
	\item $J_\cF$ is an antiunitary bundle endomorphism on $\cF$ such that
		\begin{enumerate}
		\item $J_\cF^2 = \eps \Id_\cF$,
		\item $D_\cF J_\cF = \epsp J_\cF D_\cF$, $\nabla^\cF \circ J_\cF = J_F \circ \nabla^\cF$, and $J_\cF a^\ast J_\cF^\ast = a^o$ for all $a \in C^\infty(X,\cA)$,
		\item $\gamma_\cF J_\cF = \epspp J_\cF \gamma_\cF$, if $n$ is even,
		\end{enumerate}
		where $(\eps,\epsp,\epspp) \in \set{\pm 1}^3$ depend on $n \bmod 8$ according to Table~\ref{kotable} (or, equivalently, according to Table~\ref{newkotable}).
\end{enumerate}
\end{definition}

\begin{remark}For each $x \in X$, $(\cA_x,\cF_x,(D_\cF)_x,(J_\cF)_x)$ is a finite real spectral triple of $KO$-dimension $n \bmod 8$, and the real family can be viewed as a family $(\cA,\cF,D_\cF,J_\cF)$ of finite real spectral triples over $X$ together with Bismut superconnection $D_\cF + \nabla^\cF$.
\end{remark}

If $F = (A_F,H_F,D_F,J_F)$ is a single finite real spectral triple of $KO$-dimension $n \bmod 8$, and $X$ is a compact oriented Riemannian manifold, then for $\cA := X \times A_F$ we can define a real family of $KO$-dimension $n \bmod 8$ by
\[
	(\cA,\cF,\nabla^\cF,D_\cF,J_\cF) := (X \times A_F,X \times H_F,d, \Id_X \times D_F, \Id_X \times J_F).
\]
More generally, let $G$ be a compact Lie group acting on $F$, in the sense that there exists a unitary representation $U : G \to U(H_F)$ such that for all $g \in G$,
\[
	U(g)A_F U(g)^\ast \subset A_F, \quad [U(g),D_F] = 0, \quad [U(g),J_F] = 0,
\]
with each $U(g)$ even if $n$ is even, and let $\cP \to X$ be a principal $G$-bundle with connection $\nabla^\cP$. Then we can define a real family of $KO$-dimension $n \bmod 8$ by
\[
	(\cA, \cF,\nabla^\cF,D_\cF,J_\cF) := (\cP \times_G A_F, \cP \times_G H_F, \nabla^{\cP \times_G H_F},\Id_\cP \times D_F,\Id_\cP \times J_F),
\]
with \ZZ-grading, if $n$ is even, given by $\gamma_\cF := \Id_\cP \times \gamma_F$ for $\gamma_F$ the $\ZZ$-grading of $F$.

In light of Def.~\ref{products} and Thm.~\ref{dd}, one therefore defines the twisting of a concrete real commutative spectral triple by a real family as follows:

\begin{definition}
Let $(C^\infty(X),L^2(X,\cV),D_\cV,J_\cV)$ be a concrete real commutative spectral triple of $KO$-dimension $m \bmod 8$ and metric dimension $p$, with \ZZ-grading $\gamma_\cV$ if $m$ is even and let $(\cA,\cF,\nabla^\cF,D_\cF,J_\cF)$ be a real family of $KO$-dimension $n \bmod 8$, with \ZZ-grading $\gamma_\cF$ is $n$ is even. Then the \emph{twisting} of $(C^\infty(X),L^2(X,\cV),D_\cV,J_\cV)$ by $(\cA,\cF,\nabla^\cF,D_\cF,J_\cF)$ is the concrete real almost-commutative spectral triple
\[
	(\cA,\cF,\nabla^\cF,D_\cF,J_\cF) \times (C^\infty(X),L^2(X,\cV),D_\cV,J_\cV) := (C^\infty(X,\cA),L^2(X,\cE),D,J),
\]
of $KO$-dimension $m+n \bmod 8$ and metric dimension $p$, where $\cE$, $D$ and $J$ are defined as follows:
\begin{enumerate}
	\item if $m$ and $n$ are both even, then
	\[
		\cE := \cV \otimes \cF, \quad D :=  D_\cV \otimes_{\nabla^\cF} 1 + \gamma_\cV \otimes D_\cF, \quad J := (J_\cV)_{\epspp(m)} \otimes J_\cF,
	\]
	with \ZZ-grading $\gamma := \gamma_\cV \otimes \gamma_\cF$;
	\item if $m$ is even and $n$ is odd, then
	\[
		\cE := \cV \otimes \cF, \quad D :=  D_\cV \otimes_{\nabla^\cF} 1 + \gamma_\cV \otimes D_\cF, \quad J := (J_\cV)_{\epsp(m+n)} \otimes J_\cF;
	\]
	\item if $m$ is odd and $n$ is even, then
	\[
		\cE := \cV \otimes \cF, \quad D :=  D_\cV \otimes_{\nabla^\cF} \gamma_\cF + 1 \otimes D_\cF, \quad J := J_\cV \otimes (J_\cF)_{\epsp(m+n)};
	\]
	\item if $m$ and $n$ are both odd, then
	\[
		\cE := \cV \otimes \cF \otimes \bC^2, \quad D := D_\cV \otimes_{\nabla^\cF} 1 \otimes \sigma_1 + 1 \otimes D_\cF \otimes \sigma_2, \quad J := J_\cV \otimes J_\cF \otimes M K,
	\]
	with \ZZ-grading $\gamma := 1 \otimes 1 \otimes \sigma_3$, where $K$ is the complex conjugation on $\bC^2$ and $M := M_+$ is given by Table 3.3.
\end{enumerate}
In the expressions above, if $T = 1$ or $\gamma_\cF$, then $D_\cV \otimes_{\nabla^\cF} T$ is defined locally by
\[
	(D_\cV \otimes_{\nabla^\cF} T)(\eta \otimes \xi) := (D_\cV \eta) \otimes \xi + \sum_i (c(e^i)\eta) \otimes \nabla_{e_i}^\cF (T \xi), \quad \eta \in C^\infty(X,\cV), \; \xi \in C^\infty(X,\cF),
\]
where $\set{e_i}$ is a local vielbein on $TX$.
\end{definition}

The essential point in checking that this definition makes sense is checking that one does indeed get a concrete almost-commutative spectral triple.

\begin{remark}
Let $(C^\infty(X),L^2(X,\cV),D_\cV,J_\cV)$ be a concrete real commutative spectral triple and let $(\cA,\cF,\nabla^\cF,D_\cF,J_\cF)$ be a real family; for simplicity, suppose that both are of even $KO$-dimension. Then $(C(X,\cF),D_\cF,\nabla^\cF)$ can be viewed as an unbounded $(C(X,\cA),C(X))$-bimodule in the sense of Mesland, and the twisting
\[
	(\cA,\cF,\nabla^\cF,D_\cF,J_\cF) \times (C^\infty(X),L^2(X,\cV),D_\cV,J_\cV) =: (C^\infty(X,\cA),L^2(X,\cE),D,J)
\]
can be viewed as an unbounded Kasparov product, that is,
\[
	(L^2(X,\cE),D) \cong (C(X,\cF),D_\cF,\nabla^\cF) \times (L^2(X,\cV),D_\cV),
\]
and hence $(\cF,\nabla^\cF,D_\cF)$ defines a morphism
\[
	(C(X,\cF),D_\cF,\nabla^\cF) : (C^\infty(X,\cA),L^2(X,\cE),D) \to (C^\infty(X),L^2(X,\cV),D_\cV)
\]
in Mesland's category of spectral triples.
\end{remark}

Finally, let us record the consequences of this construction for the structure of inner fluctuations of the metric:

\begin{proposition}
Let $(C^\infty(X),L^2(X,\cV),D_\cV,J_\cV)$ be a concrete real commutative spectral triple of $KO$-dimension $m \bmod 8$ and let $(\cA \cF,\nabla^\cF,D_\cF,J_\cF)$ be a real family of $KO$-dimension $n \bmod 8$. Let
\[
	(C^\infty(X,\cA),L^2(X,\cE),D,J) := (\cA,\cF,\nabla^\cF,D_\cF,J_\cF) \times (C^\infty(X),L^2(X,\cV),D_\cV,J_\cV),
\]
which is a concrete real almost-commutative spectral triple of $KO$-dimension $m+n \bmod 8$. Let $\bA \in C^\infty(X,\End(\cE))$ be an inner fluctuation of the metric on the twisting, i.e., symmetric and of the form $\bA = \sum_i a_i[D,b_i]$ for $a_i$, $b_i \in C^\infty(X,\cA)$, and let
\begin{align*}
	\omega_\bA &:= \sum_i \left( a_i \wedge (\nabla^\cF b_i) - (\nabla^\cF b_i^o) \wedge a_i^o\right) \in \Omega^1(X,\cA \otimes \cA^o),\\
	\Phi_\bA &:= \sum_i \left( a_i [D_\cF,b_i] - [D_\cF,b_i^o]a_i^o \right) \in C^\infty(X,\End(\cF)).
\end{align*}
Then $(\cA,\cF,\nabla^\cF + \omega_\bA,D_\cF + \Phi_\bA,J_\cF)$ is a real family of $KO$-dimension $n \bmod 8$ such that
\begin{multline*}
	(C^\infty(X,\cA),L^2(X,\cE),D + \bA + \epsp J \bA J^\ast,J) \\
	= (\cA,\cF,\nabla^\cF + \omega_\bA,D_\cF + \Phi_\bA,J_\cF) \times (C^\infty(X),L^2(X,\cV),D_\cV,J_\cV).
\end{multline*}
\end{proposition}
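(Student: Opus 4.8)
The plan is to verify two things: that $(\cA,\cF,\nabla^\cF+\omega_\bA,D_\cF+\Phi_\bA,J_\cF)$ is again a real family of $KO$-dimension $n\bmod 8$, and that twisting $(C^\infty(X),L^2(X,\cV),D_\cV,J_\cV)$ by it reproduces $(C^\infty(X,\cA),L^2(X,\cE),D+\bA+\epsp J\bA J^\ast,J)$, with the real structure $J$ left unchanged. I would carry this out in the case where $m$ and $n$ are both even; the other three cases of the twisting construction are formally identical modulo the bookkeeping of gradings, the auxiliary $\bC^2$-factor, and the sign tables \eqref{newkotable} and \eqref{oddodd}.

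The first step is a local computation. For $b\in C^\infty(X,\cA)$, which acts on $\cE=\cV\otimes\cF$ as $1\otimes b$ and hence commutes with $D_\cV$, with $\gamma_\cV$, and with each $c(e^i)$, the local formula for $D_\cV\otimes_{\nabla^\cF}(\cdot)$ gives
\[
	[D,1\otimes b]=c(\nabla^\cF b)+\gamma_\cV\otimes[D_\cF,b],\qquad c(\nabla^\cF b):=\sum_i c(e^i)\otimes\nabla^\cF_{e_i}b,
\]
the Clifford action of the $\cA$-valued one-form $\nabla^\cF b$ (here one uses that the induced connection on $\End(\cF)$ restricts to $\cA$). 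Summing against the $a_i$ yields $\bA=c(\omega)+\gamma_\cV\otimes\Phi$ with $\omega:=\sum_i a_i\nabla^\cF b_i\in\Omega^1(X,\cA)$ and $\Phi:=\sum_i a_i[D_\cF,b_i]$. Since $c(e^i)^\ast=-c(e^i)$ while $\gamma_\cV$ is a self-adjoint unitary, the decomposition of $\bA$ according to parity with respect to $\gamma_\cV$ (in which $c(\omega)$ is odd and $\gamma_\cV\otimes\Phi$ is even) is preserved by $\ast$, so the hypothesis $\bA=\bA^\ast$ forces $\omega^\ast=-\omega$ and $\Phi^\ast=\Phi$; here one uses that the $c(e^i)$ are linearly independent in $\End(\cV)$ to pass from $c(\omega)=-c(\omega^\ast)$ to $\omega^\ast=-\omega$. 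Conjugating $\Phi=\Phi^\ast$ by $J_\cF$, using the axioms $J_\cF a^\ast J_\cF^\ast=a^o$ and $J_\cF D_\cF J_\cF^\ast=\epsp D_\cF$, then gives the identity $\sum_i(a_i^\ast)^o[D_\cF,(b_i^\ast)^o]=-\sum_i[D_\cF,b_i^o]a_i^o$, while the relation $\omega^\ast=-\omega$ reads $\sum_i(\nabla^\cF b_i^\ast)a_i^\ast=-\omega$.

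Next I would compute $\epsp J\bA J^\ast$. Using $JDJ^\ast=\epsp D$, $J(1\otimes b)J^\ast=1\otimes(b^\ast)^o$, and the compatibility of $\nabla^\cF$ and $D_\cF$ with the $o$-operation (a consequence of $\nabla^\cF\circ J_\cF=J_\cF\circ\nabla^\cF$, $J_\cF D_\cF J_\cF^\ast=\epsp D_\cF$ and self-adjointness of the connection), the same local computation applied to elements of $\cA^o$ gives
\[
	\epsp J\bA J^\ast=c\Bigl(\sum_i\bigl((\nabla^\cF b_i^\ast)a_i^\ast\bigr)^o\Bigr)+\gamma_\cV\otimes\sum_i(a_i^\ast)^o[D_\cF,(b_i^\ast)^o].
\]
By the two identities of the previous paragraph, together with $\omega^o=\sum_i(\nabla^\cF b_i^o)a_i^o$, this collapses to $\epsp J\bA J^\ast=c\bigl(-\sum_i(\nabla^\cF b_i^o)a_i^o\bigr)+\gamma_\cV\otimes\bigl(-\sum_i[D_\cF,b_i^o]a_i^o\bigr)$, whence, adding, $\bA+\epsp J\bA J^\ast=c(\omega_\bA)+\gamma_\cV\otimes\Phi_\bA$.

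It remains to identify this with the twisting by $(\cA,\cF,\nabla^\cF+\omega_\bA,D_\cF+\Phi_\bA,J_\cF)$. First one checks this quintuple is a real family: $\omega_\bA=\omega-\omega^o$ lies in $\Omega^1(X,\cA)+\Omega^1(X,\cA^o)$ and is skew-adjoint (by $\omega^\ast=-\omega$), so $\nabla^\cF+\omega_\bA$ is again a self-adjoint connection, of the required parity, whose induced connection on $\End(\cF)$ still restricts to $\cA$ and to $\cA^o$ (because $\cA$ and $\cA^o$ commute on $\cF$, so $[\omega_\bA,-]$ preserves each), while $D_\cF+\Phi_\bA$ is symmetric, odd if $n$ is even, anticommutes with $\gamma_\cF$, and satisfies both $[[D_\cF+\Phi_\bA,a],b^o]=0$ and $(D_\cF+\Phi_\bA)J_\cF=\epsp J_\cF(D_\cF+\Phi_\bA)$, the last two following from the identities of the preceding paragraphs. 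Then, from the local formula again, $D_\cV\otimes_{\nabla^\cF+\omega_\bA}1=D_\cV\otimes_{\nabla^\cF}1+c(\omega_\bA)$, while $\gamma_\cV\otimes(D_\cF+\Phi_\bA)=\gamma_\cV\otimes D_\cF+\gamma_\cV\otimes\Phi_\bA$; hence the Dirac operator of the twisting by the perturbed family is $D+c(\omega_\bA)+\gamma_\cV\otimes\Phi_\bA=D+\bA+\epsp J\bA J^\ast$, and its real structure $(J_\cV)_{\epspp(m)}\otimes J_\cF$ is exactly $J$. This gives the asserted equality. I expect the bulk of the work to be organisational---running the same argument through all four parity cases, the odd--odd one with its $\bC^2$-factor and the matrices of \eqref{oddodd} being the most delicate, and keeping straight the reality signs attached to $m$, $n$ and $m+n$---while the one genuinely substantive point is the compatibility of $\nabla^\cF$ and the induced connections on $\cA$, $\cA^o$ with the $o$-operation, which is precisely what makes $\nabla^\cF+\omega_\bA$ a legitimate real-family connection.
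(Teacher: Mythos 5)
The paper states this proposition without giving a proof, so there is nothing to compare against line by line; your argument is the natural one and, as far as I can check, it is correct: the decomposition $\bA = c(\omega)+\gamma_\cV\otimes\Phi$ via the local formula for $D_\cV\otimes_{\nabla^\cF}1$, the use of symmetry of $\bA$ together with parity with respect to $\gamma_\cV\otimes 1$ to get $\omega^\ast=-\omega$ and $\Phi^\ast=\Phi$, and the conjugation by $J$ to produce exactly the $\cA^o$-halves of $\omega_\bA$ and $\Phi_\bA$ all go through, and the identification of the perturbed twisting is then immediate from the same local formula. Two small points you should make explicit rather than leave implicit: the perturbed family must also satisfy axiom 5(b), i.e.\ $(\nabla^\cF+\omega_\bA)\circ J_\cF=J_\cF\circ(\nabla^\cF+\omega_\bA)$, which follows from $J_\cF\omega J_\cF^\ast=-\omega^o$ and $J_\cF\omega^o J_\cF^\ast=-\omega$ (consequences of $\omega^\ast=-\omega$ and the fact that $T\mapsto T^o$ is a $\ast$-compatible anti-homomorphism commuting with the connection), and the remaining three parity cases do need the analogous bookkeeping spelled out (in the odd cases the parity separation is done with $1\otimes\gamma_\cF$ or $1\otimes1\otimes\sigma_3$ instead of $\gamma_\cV\otimes1$), though no new idea is required there.
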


Thus, for a real almost-commutative spectral triple formed by the twisting of a concrete real commutative spectral triple by a real family, inner fluctuations of the metric are are effected at the level of the real family, so that the concrete real commutative spectral triple may be viewed strictly as background data. For a general, $KK$-theoretic  discussion of this kind of phenomenon, see~\cite{BMvS}.

\begin{bibdiv}
\begin{biblist}
\bib{BvS10}{article}{
	author={Boeijink, J.},
	author={van Suijlekom, W. D.},
	title={The noncommutative geometry of Yang-Mills fields},
	journal={J.\ Geom.\ Phys.},
	volume={61},
	date={2011},
	number={6},
	pages={1122-1134},
}
\bib{BMvS}{article}{
	author={Brain, S.},
	author={Mesland, B.},
	author={van Suijlekom, W. D.},
	title={Gauge theory and the unbounded Kasparov product},
	note={(in preparation)},
}
\bib{Ca12}{article}{
   author={{\'C}a{\'c}i{\'c}, B.},
   title={A reconstruction theorem for almost-commutative spectral triples},
   journal={Lett. Math. Phys.},
   volume={100},
   date={2012},
   number={2},
   pages={181--202},
}
\bib{Con95b}{article}{
   author={Connes, A.},
   title={Noncommutative geometry and reality},
   journal={J. Math. Phys.},
   volume={36},
   date={1995},
   number={11},
   pages={6194--6231},
}
\bib{Con96}{article}{
   author={Connes, A.},
   title={Gravity coupled with matter and the foundation of non-commutative
   geometry},
   journal={Comm. Math. Phys.},
   volume={182},
   date={1996},
   number={1},
   pages={155--176},
}
\bib{Con08}{article}{
	author={Connes, A.},
	title={On the spectral characterization of manifolds},
	date={2008},
	eprint={arXiv:0810.2088v1 [math.OA]}
}
\bib{CCM}{article}{
   author={Chamseddine, A. H.},
   author={Connes, A.},
   author={Marcolli, M.},
   title={Gravity and the standard model with neutrino mixing},
   journal={Adv. Theor. Math. Phys.},
   volume={11},
   date={2007},
   number={6},
   pages={991--1089},
}
\bib{DaDo}{article}{
   author={D{\polhk{a}}browski, L.},
   author={Dossena, G.},
   title={Product of real spectral triples},
   journal={Int. J. Geom. Methods Mod. Phys.},
   volume={8},
   date={2011},
   number={8},
   pages={1833--1848},
}
\bib{GBVF}{book}{
   author={Gracia-Bond{\'{\i}}a, J. M.},
   author={V{\'a}rilly, J. C.},
   author={Figueroa, H.},
   title={Elements of noncommutative geometry},
   series={Birkh\"auser Advanced Texts: Basler Lehrb\"ucher},
   publisher={Birkh\"auser Boston Inc.},
   place={Boston, MA},
   date={2001},
}
\bib{LRV}{article}{
   author={Lord, S.},
   author={Rennie, A.},
   author={V{\'a}rilly, J. C.},
   title={Riemannian manifolds in noncommutative geometry},
   journal={J. Geom. Phys.},
   volume={62},
   date={2012},
   number={7},
   pages={1611--1638},
}
\bib{Me09b}{article}{
	author={Mesland, B.},
	title={Unbounded bivariant $K$-theory and correspondences in noncommutative geometry},
	journal={J. Reine Angew. Math.},
	note={(in press)},
}
\bib{Otgo09}{thesis}{
   author={Otgonbayar, U.},
   title={Local index theorem in noncommutative geometry},
   type={Ph.D.\ dissertation},
   organization={The Pennsylvania State University},
   date={2009},
}
\bib{Pl86}{article}{
   author={Plymen, R. J.},
   title={Strong Morita equivalence, spinors and symplectic spinors},
   journal={J. Operator Theory},
   volume={16},
   date={1986},
   number={2},
   pages={305--324},
}
\bib{RV06}{article}{
	author={Rennie, A.},
	author={V\'arilly, J. C.},
	title={Reconstruction of manifolds in noncommutative geometry},
	date={2006},
	eprint={arXiv:math/0610418v4 [math.OA]},
}
\bib{Van99}{article}{
   author={Vanhecke, F. J.},
   title={On the product of real spectral triples},
   journal={Lett. Math. Phys.},
   volume={50},
   date={1999},
   number={2},
   pages={157--162},
}
\end{biblist}
\end{bibdiv}

\end{document}